\newtheorem{thm}{Theorem}[section]
\newtheorem{prop}[thm]{Proposition}
\newtheorem{defn}[thm]{Definition}
\theoremstyle{definition}
\newtheorem{conjecture}[thm]{Conjecture}
\NewDocumentCommand{\binomial}{omm}
{%
	\genfrac(){0pt}{}{#2}{#3}%
	\IfValueT{#1}{_{\!#1}}%
}
\NewDocumentCommand{\eulerian}{omm}
{%
	\genfrac<>{0pt}{}{#2}{#3}%
	\IfValueT{#1}{_{\!#1}}%
}
\def \s {\sigma}
\title{Biadjoint Scalars and Associahedra from Residues of Generalized Amplitudes}
\author[a]{Freddy Cachazo}
\author[b]{and Nick Early}
\affiliation[a]{Perimeter Institute for Theoretical Physics, Waterloo, ON N2L 2Y5, Canada.}
\affiliation[b]{Max Planck Institute for Theoretical Physics, F\"{o}hringer Ring 6, Munich, Germany.}
\emailAdd{fcachazo@pitp.ca}
\emailAdd{earlnick@mpp.mpg.de}
\abstract{
In the Grassmannian formulation of the S-matrix for planar $\mathcal{N}=4$ Super Yang-Mills, $N^{k-2}MHV$ scattering amplitudes for $k$ negative and $n-k$ positive helicity gluons can be expressed, by an application of the global residue theorem, as a signed sum over a collection of $(k-2)(n-k-2)$-dimensional residues.  These residues are supported on certain positroid subvarieties of the Grassmannian $G(k,n)$.  In this paper, we replace the Grassmannian $G(3,n)$ with its torus quotient, the moduli space of $n$ points in the projective plane in general position, and planar $\mathcal{N}=4$ SYM with generalized biadjoint scalar amplitudes $m^{(3)}_n$ as introduced by Cachazo-Early-Guevara-Mizera (CEGM) \cite{Cachazo:2019ngv}.  Whereas in the Grassmannian formulation residues of the Parke-Taylor form correspond to individual BCFW, or on-shell diagrams, we show that each such $(n-5)$-dimensional residue of $m^{(3)}_n$ is an entire biadjoint scalar partial amplitude $m^{(2)}_n$, that is a sum over all tree-level Feynman diagrams for a fixed planar order.  We make a proposal which would do the same for $k\ge 4$.   

}
\begin{document}
	\maketitle
	\addtocontents{toc}{\protect\setcounter{tocdepth}{1}}
	\def \tr {\nonumber\\}
	\def \nn {\nonumber}
	\def \la {|}
	\def \ra {|}
	\def \lan {\langle}
	\def \ran {\rangle}
	\def \dd {\Theta}
	\def\hset{\texttt{h}}
	\def\gset{\texttt{g}}
	\def\sset{\texttt{s}}
	\def \be {\begin{equation}}
		\def \ee {\end{equation}}
	\def \ba {\begin{eqnarray}}
		\def \ea {\end{eqnarray}}
	\def \bg {\begin{gather}}
		\def \eeg {\end{gather}}
	\def \k {\kappa}
	\def \h {\hbar}
	\def \r {\rho}
	\def \l {\lambda}
	\def \be {\begin{equation}}
		\def \en {\end{equation}}
	\def \bes {\begin{eqnarray}}
		\def \ens {\end{eqnarray}}
	\def \red {\color{Maroon}}
	\def \pt {{\rm PT}}
	\def \s {\mathfrak{s}}
	\def \t {\mathfrak{t}}
	\def \v {\mathfrak{v}}
	\def \C {\textsf{C}}
	\def \tp {||}
	\def \p {x}
	\def \x {z}
	\def \V {\textsf{V}}
	\def \ls {{\rm LS}}
	\def \ma {\Upsilon}
	\def \SL {{\rm SL}}
	\def \GL {{\rm GL}}
	\def \w {\omega}
	\def \e {\epsilon}
	
	\numberwithin{equation}{section}

	
	\section{Introduction}
	
	Biadjoint scalar amplitudes, $m^{(2)}_n$, are the simplest amplitudes that admit a Cachazo-He-Yuan (CHY) formulation \cite{Cachazo:2013gna,Cachazo:2013hca,Cachazo:2013iea,Dolan:2013isa}.  These doubly flavor-ordered tree-level amplitudes, $m_n^{(2)}$, are given as integrals over the configuration space of $n$ points on $\mathbb{CP}^1$, usually denoted by $X(2,n)$. More generally, we denote by $X(k,n)$ the moduli space of $n$ generic points on $\mathbb{CP}^{k-1}$.

	The generalization of the CHY formula for biadjoint scalar amplitudes to integrals over $X(k,n)$ leads to CEGM amplitudes \cite{Cachazo:2019ngv}, $m_n^{(k)}$. Since their discovery, a number of properties that mirror those of $\mathcal{N}=4$ super Yang-Mills (SYM) amplitudes have been found. For example, it is now known that generalized amplitudes admit ``soft'' and ``hard limits" \cite{GarciaSepulveda:2019jxn}, and in fact, the limits are mapped to each other under the duality $(k,n)\sim (n-k,n)$. Under the limits, $m_n^{(k)}$ gives rise to either $m_{n-1}^{(k)}$ or $m_{n-1}^{(k-1)}$ respectively. This behaviour is reminiscent of soft limits of $(k,n)$ ${\cal N}=4$ SYM amplitudes which give rise to $(k,n-1)$ amplitudes or $(k-1,n-1)$ amplitudes depending on the helicity of the soft gluon. Moreover, a parity transformation takes $(k,n)$ SYM amplitudes into $(n-k,n)$ SYM amplitudes and thus exchanges the two soft limits.

	Our main result is the proof that $m_n^{(2)}$ can be obtained via a multidimensional residue of $m_n^{(3)}$ in ${\rm dim}\, X(3,n)-{\rm dim}\, X(2,n)= n-5$ kinematic invariants. More precisely, in Theorem \ref{residual3} we prove that
	\be 
	{\rm Res}[m^{(3)}_n]_{(\t_{45\ldots n}=0,\t_{56\ldots n}=0,\cdots ,\s_{n-2,n-1,n}=0)} = m^{(2)}_n
	\ee 
	for some appropriate identification of kinematic invariants presented in \eqref{fullmap}.  
	
	We also conjecture a way to extend our procedure to all $k$ that requires a $(k-2)(n-k-2)$-dimensional residue.  The precise coincidence with the Grassmannian formulation of ${\cal N}=4$  SYM scattering amplitudes is not surprising since 
	\be 
	{\rm dim}\, G(k,n)-{\rm dim}\, G(2,n)  = {\rm dim}\, X(k,n)-{\rm dim}\, X(2,n) = (k-2)(n-k-2). 
	\ee 
	This follows from ${\rm dim}\, G(k,n) = {\rm dim}\, X(k,n)+(n-1)$.
	
	Returning to $k=3$, the ways of choosing the $(n-5)$-dimensional residue have a interesting combinatorial structure. We find exactly $n\cdot C_{n-5}$ ways, with $C_p$ the $p^{\rm th}$ Catalan number, one for each choice of $n-5$ consecutive labels and each triangulation of a $(n-3)$-gon. The bijection between such triangulations and Feynman diagrams is a hint that a $m^{(2)}_{n-3}$ amplitude could be constructed. Indeed, by specializing the kinematics to what we call a ``parallel hard'' limit, we find that $m^{(3)}_n\sim m^{(2)}_{n-3}m^{(2)}_n$. 
	
	The reader familiar with Grassmannian formulations of SYM amplitudes would also recall that there is an alternative version based on the Veronese embedding of $\mathbb{CP}^1$ into $\mathbb{CP}^{k-1}$ which is equivalent to the Witten-RSV formulation \cite{Witten:2003nn,Roiban:2004yf}. In this work we also start the study of modified CEGM amplitudes which have residues on poles that enforce the Veronese condition on the points in $\mathbb{CP}^2$. Moreover, we show for $(3,6)$ that the residue coincides with $m^{(2)}_6$.

	The \textit{existence} of residues of $m^{(3)}_n$ which can be identified with $m^{(2)}_n$, would perhaps not be surprising to experts; see for example \cite{Arkani-Hamed:2019mrd, Arkani-Hamed:2019plo} as well as \cite{Abhishek:2020sdr}, where the forward limit of $m^{(2)}_6$ was identified with a facet of the D4 cluster polytope.

	With regard to the all $k\ge 3$ generalization we formulate an explicit identification of $m^{(2)}_n$ with a particular $(k-2)(n-k-2)$-dimensional residue of $m^{(k)}_n$, respectively $m^{(k,NC)}_n$, which directly generalizes our $k=3$ result; therefore the $k=3$ case has been proved in this paper, see Theorem \ref{residual3}.  The generalization is stated for all $3\le k\le n-3$.

	Additionally, our identification of $(n-5)$-dimensional residues using the CEGM formula predicts a novel formula for the associahedron, as a Minkowski sum involving positroid polytopes in the second hypersimplices $\Delta_{2,n}$; the end result is a generalized permutohedron, as in the realization due to Loday \cite{loday2004realization}, but the presence of the rank two positroids is new.  It seems reasonable to expect that a story will emerge for $k\ge 4$ as well, involving positroid polytopes of higher ranks.
	
	This note is organized as follows: In section \ref{sec: CHY CEGM review} we review the CHY and CEGM formulas for $m^{(2)}_n$ and $m^{(k)}_n$ respectively. In section \ref{particularR} we state and prove the main result of this work by computing a $(n-5)$-dimensional residue of $m^{(3)}_n$ using the CEGM formula and obtaining $m^{(2)}_n$. In section \ref{sec: paths emergent amp} we describe the different choices of residues and the emergence of $m_{n-3}^{(2)}$. This leads to a proposal for all $k$ which is explicitly defined in section \ref{sec: generalization higher k proposal}.  Also in section \ref{sec: generalization higher k proposal}, we explore connections to associahedra and the noncrossing amplitude $m^{(3,NC)}_n$.  In section \ref{veronese} we present the Veronese generalization of biadjoint amplitudes and compute the residue that leads to $m^{(2)}_6$. We end with discussion about relations to other constructions and future directions in section \ref{sec: discussions}.  
	
	\section{Review of CHY and CEGM Formulations}\label{sec: CHY CEGM review}
	
	The biadjoint scalar theory has flavour group $U(N)\times U({\tilde N})$, a massless scalar field in the biadjoint representation $\phi^{a\tilde a}$ and cubic interactions governed by the structure constants of the flavour group. Tree-level scattering amplitudes can be flavour decomposed and computed via the standard Feynman diagrammatic technique or the CHY formulation. We refer the reader to \cite{Cachazo:2013gna,Cachazo:2013hca,Cachazo:2013iea,Dolan:2013isa} for a more in-depth discussion. Here we simply review the formula for the amplitude. The CHY formula is an integral over the configuration space of $n$ generic points on $\mathbb{CP}^1$. A convenient parameterization of $X(2,n)$ is  
	\be\label{first2n} 
	M := \left[ \begin{array}{ccccccccc}
		1 & 0 & 1 & y_4 & y_5   & \cdots & y_{n-1} & y_n  \\
		0 & 1 & 1 & 1   & 1     & \cdots & 1 & 1
	\end{array}  \right] .
	\ee 
	The CHY formula is given by
	\be\label{intCHY} 
	m^{(2)}(\alpha, \beta ) =\int \prod_{a=4}^{n}dy_a\,\delta\left( \frac{\partial {\cal S}^{(2)}_n}{\partial y_a}\right)\, {\rm PT}(\alpha ){\rm PT}(\beta )
	\ee 
	where 
	\be\label{chyFirst} 
	{\cal S}^{(2)}_n := \sum_{a<b} s_{ab}\log \Delta_{ab}, \quad \Delta_{ab} := {\rm det}\left[ \begin{array}{cc}
		M_{1a} & M_{1b} \\
		M_{2a} & M_{2b}
	\end{array}  \right]
	\ee 
	and the so-called Parke-Taylor functions are defined for an ordering $\alpha = (\alpha_1 ,\alpha_2, \ldots ,\alpha_n )$ of $[n]$,
	\be 
	{\rm PT}(\alpha ) := \frac{1}{\Delta_{\alpha_1,\alpha_2}\Delta_{\alpha_2,\alpha_3}\cdots \Delta_{\alpha_n,\alpha_1}}.
	\ee 
	The integrals in \eqref{intCHY} are localized to the $(n-3)!$ solutions to the scattering equations and therefore,
	\be\label{intCHY2} 
	m^{(2)}(\alpha, \beta ) =\sum_{\rm solutions} \left({\rm det}\left[ \frac{\partial^2 {\cal S}^{(2)}_n}{\partial y_a\partial y_b}\right]\right)^{-1} \, {\rm PT}(\alpha ){\rm PT}(\beta ) .
	\ee 
	In order to make the permutation invariance of the measure manifest, CHY introduced \cite{Cachazo:2013gna,Cachazo:2013hca,Cachazo:2013iea} a reduced jacobian, ${\rm det}'\Phi$, which in our parameterization \eqref{first2n} becomes
	\be 
	{\rm det}'\Phi ={\rm det}\left[ \frac{\partial^2 {\cal S}^{(2)}_n}{\partial y_a\partial y_b}\right].
	\ee 

	In 2019, CEGM generalized the CHY formula to construct ``generalized biadjoint amplitudes'' as integrals over $X(k,n)$ \cite{Cachazo:2019ngv}. Of course, since $X(n-2,n)\sim X(2,n)$, the CEGM formulas for $k=n-2$ give a new but equivalent representation of the standard amplitudes, i.e. $m^{(2)}_n = m^{(n-2)}_n$ up to a proper identification of kinematic invariants. 
	
	We refer the reader to \cite{Cachazo:2019ngv} for more details. Here we simply note that using the parameterization of $X(3,n)$ given by %
	\be\label{cegm3para} 
	M := \left[ \begin{array}{cccccccc}
		1 & 0 & 0 & 1 & y_{15}  & y_{16} & \cdots & y_{1n}  \\
		0 & 1 & 0 & 1 &  y_{25}  & y_{26} & \cdots & y_{2n} \\
		0 & 0 & 1 & 1 &    1  & 1   & \cdots & 1 
	\end{array}  \right] ,
	\ee 
	one finds 
	\be\label{cegm3n}
	m^{(3)}_n(\alpha,\beta):= \sum_{\rm solutions} \left({\rm det}\left[ \frac{\partial^2 {\cal S}^{(3)}_n}{\partial y_{ia}\partial y_{jb}}\right]\right)^{-1} \, {\rm PT}^{(3)}(\alpha ){\rm PT}^{(3)}(\beta ),
	\ee
	where 
	\be\label{cegmPot} 
	{\cal S}^{(3)}_n := \sum_{a<b<c} \s_{abc}\log \Delta_{abc}, \quad \Delta_{abc} := {\rm det}\left[ \begin{array}{ccc}
		M_{1a} & M_{1b} & M_{1c}\\
		M_{2a} & M_{2b} & M_{2c}\\
		M_{3a} & M_{3b} & M_{3c}
	\end{array}  \right] ,
	\ee 
	and
	\be 
	{\rm PT}(\alpha ) := \frac{1}{\Delta_{\alpha_1,\alpha_2,\alpha_3}\Delta_{\alpha_2,\alpha_3,\alpha_4}\cdots \Delta_{\alpha_{n-1},\alpha_n,\alpha_1}\Delta_{\alpha_n,\alpha_1,\alpha_2}}.
	\ee 
	The generalized kinematic Mandelstam invariants in $\mathcal{S}^{(3)}_n$ are defined by the following conditions,
	\be
	\s_{abc} = \s_{bac} = \s_{acb},\quad \s_{aab}=0,\quad \sum_{b,c=1}^n\s_{abc} =0 \quad \forall a.  
	\ee
	It is also useful to define
	\be
	\t_J := \sum_{\{a,b,c\} \subset J}\s_{abc}.
	\ee
	For example, $\t_{1234}=\s_{123}+\s_{124}+\s_{134}+\s_{234}$.
	
	More generally, the formula for any $k$ is easily obtained in a completely analogous parameterization using $y_{ia}$ with $i\in \{1,2,\ldots ,k-1\}$ and $a\in \{ k+2,k+3,\ldots ,n\}$.

	\section{From $k=3$ CEGM Integrals to CHY Integrals}\label{particularR}
	
	In this section we present the explicit computation of one of the multi-dimensional residues which connects $m^{(3)}_n$ to $m^{(2)}_n$. In this work we only deal with partial amplitudes where both orderings are the canonical order. We therefore abuse notation and write
	\be  
	m^{(k)}_n := m^{(k)}_n(\mathbb{I},\mathbb{I}). 
	\ee 

	The main result of this work is the following theorem. 
	
	\begin{thm}\label{residual3}
		The $(n-5)$-dimensional residue of $m_n^{(3)}$ defined by 
		\be\label{poles} 
		(\t_{45\ldots n},\ \t_{56\ldots n},\ \t_{67\ldots n},\ldots, \s_{n-2,n-1,n})=(0,0,0,\ldots , 0)
		\ee
		is equal to $m_n^{(2)}$ under some identification of kinematic invariants.
	\end{thm}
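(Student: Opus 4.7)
The plan is a direct computation from the CEGM integral representation \eqref{cegm3n} in the parameterization \eqref{cegm3para}. The first step is to interpret the nested kinematic pole geometrically. The pole $\s_{n-2,n-1,n}=0$ supports the codimension-one degeneration where points $n-2,n-1,n$ become collinear, and more generally $\t_{\{j,j+1,\ldots,n\}}=0$ supports the nested degeneration where the $n-j+1$ points $j,j+1,\ldots,n$ lie on a common line. Imposed simultaneously, these $n-5$ conditions force all $n-3$ of the points $4,5,\ldots,n$ onto a single line $L\subset\mathbb{CP}^2$. This locus has codimension exactly $(n-3)-2=n-5$ in $X(3,n)$ and is canonically isomorphic to $X(2,n)$: the three intersections of $L$ with the coordinate lines $\overline{12}$, $\overline{13}$, $\overline{23}$ play the role of the three gauge-fixed columns in \eqref{first2n}, while the positions of $4,5,\ldots,n$ along $L$ give the remaining $n-3$ moduli. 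The residual two-dimensional diagonal torus inside $\mathrm{PGL}(3)$ stabilising $\{1,2,3\}$ acts on $L$ as the $\mathrm{PGL}(2)$ that fixes those three distinguished points.

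Next I would introduce $n-5$ local transverse coordinates $\epsilon_{4},\epsilon_{5},\ldots,\epsilon_{n-2}$, where $\epsilon_{j}$ measures the deviation of point $j$ from the line spanned by $j+1,j+2,\ldots,n$; concretely one can take $\epsilon_{j}$ proportional to the minor $\Delta_{j,j+1,n}$. Split the sum in ${\cal S}^{(3)}_{n}$ into an ``inner'' piece running over triples $\{a,b,c\}\subset\{4,\ldots,n\}$ and an ``outer'' piece touching at least one of $\{1,2,3\}$. A careful expansion in $\epsilon$ should establish that the singular behaviour of the inner piece reorganises precisely as
\be
\sum_{j=4}^{n-2}\t_{\{j,j+1,\ldots,n\}}\log\epsilon_{j} \;+\; (\text{regular in }\epsilon),
\ee
so that the transverse coordinates and the residue kinematic variables are mutually dual. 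Under the kinematic identification \eqref{fullmap} sending sums of $\s_{abc}$'s to $s_{ab}$'s, the outer piece together with the regular remainder of the inner piece should, on restriction to $\epsilon=0$, assemble into ${\cal S}^{(2)}_n$ in the coordinates of \eqref{first2n}.

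I would then perform the $(n-5)$ residues sequentially: each $\mathrm{Res}_{\t_{\{j,j+1,\ldots,n\}}=0}$, applied to the exponentiated CEGM integrand, localises the corresponding transverse direction to $\epsilon_{j}=0$ by extracting the coefficient of the $\log\epsilon_{j}$ singularity. What remains is an integral over the residue locus with a restricted integrand. To conclude that this equals $m^{(2)}_n$, one must check two further ingredients: (i) that $\mathrm{PT}^{(3)}(\mathbb{I})^{2}$ restricts on the locus to $\mathrm{PT}^{(2)}(\mathbb{I})^{2}$, where the three $X(2,n)$ reference points are the intersections of $L$ with the coordinate lines; and (ii) that the reduced Jacobian $\det'\Phi^{(3)}$ factorises as the transverse Jacobian consumed by the residues, times the $X(2,n)$ reduced Jacobian $\det'\Phi^{(2)}$ evaluated on the residue locus.

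The main obstacle will be this Jacobian factorisation, together with the combinatorial bookkeeping behind \eqref{fullmap}. One must verify that the $(n-5)\times(n-5)$ block of $\partial^{2}{\cal S}^{(3)}_{n}/\partial\epsilon_{i}\partial\epsilon_{j}$ reproduces exactly the determinant consumed by the iterated residue, while the complementary $(n-3)\times(n-3)$ tangential block restricts on the residue locus to the Hessian of ${\cal S}^{(2)}_n$. The heart of the calculation will be pinning down signs and normalisations, and checking that \eqref{fullmap} is simultaneously consistent with both the exponentiated potential and the rational Parke-Taylor prefactor; everything else should follow from a direct expansion of the CEGM integrand near the residue locus.
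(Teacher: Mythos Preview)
Your plan is essentially the paper's proof: parameterize $X(3,n)$ by separating the collinear locus (points $4,\ldots,n$ on a line $L\subset\mathbb{CP}^2$) from $n-5$ transverse directions, read off the logarithmic singularities of ${\cal S}^{(3)}_n$ in those directions to identify them with the invariants in \eqref{poles}, block-factorise the Hessian, and reduce the Parke-Taylor factor. The paper executes exactly this with the explicit chart \eqref{repreM}, the matching $X(2,n)$ chart \eqref{para2n}, the Hessian analysis \eqref{hessian}--\eqref{weri}, and the kinematic map \eqref{fullmap}.

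One concrete point needs repair. Your choice $\epsilon_j\propto\Delta_{j,j+1,n}$ is inconsistent with the expansion you claim: if the singular part of ${\cal S}^{(3)}_n$ were $\sum_{j}\t_{\{j,\ldots,n\}}\log\epsilon_j$, then each inner minor $\Delta_{abc}$ with $4\le a<b<c\le n$ would scale as $\prod_{i=4}^{a}\epsilon_i$, forcing $\Delta_{j,j+1,n}\sim\epsilon_4\epsilon_5\cdots\epsilon_j$ rather than $\epsilon_j$ alone. The paper resolves this by letting the transverse parameters enter the first row of \eqref{repreM} \emph{multiplicatively}, as $z_1,\,z_1z_2,\,z_1z_2z_3,\ldots$, so that \eqref{struc} holds with each $z_i$ dual to a single $\t_{J_i}$; equivalently, your $\epsilon_j$ should be a ratio of nearby minors, not a single minor. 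Relatedly, the kinematic residue does not literally ``extract the coefficient of $\log\epsilon_j$'' from an integrand: the CEGM formula is already localised on solutions, and the mechanism is that as $\t_{J_i}\to 0$ the scattering equations pin $z_i=\t_{J_i}q_i$ via \eqref{singular}, after which the $1/\t_{J_i}$ pole emerges from the diagonal of the $zz$ Hessian block, partially cancelled against the squared Parke-Taylor as in \eqref{ppt}--\eqref{almostT}. With these two adjustments your outline coincides with the paper's argument.
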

	
	The proof is simple once a convenient parameterization of $X(3,n)$ is found. We start with one which makes the final codimension $n-5$ configuration of points of interest in $\mathbb{CP}^2$ manifest,
	\be\label{embed2n} 
	\left[ \begin{array}{ccccccccc}
		1 & 0 & 0 & 1 & x_2 & x_2-x_3  & x_2-x_4 & \cdots & x_2-x_{n-3}  \\
		0 & 1 & 0 & 1 & x_1 & x_1-x_3  & x_1-x_4 & \cdots & x_1-x_{n-3} \\
		0 & 0 & 1 & 1 &  1  &  1-x_3   & 1-x_4   & \cdots & 1 - x_{n-3} 
	\end{array}  \right] .
	\ee 
	Note that the $r^{\rm th}$-column, with $r>5$, is given by a linear combination of the $4^{\rm th}$ and $5^{\rm th}$ columns. This is precisely the configuration of points in which points $4,5,\ldots ,n$ lie on a line in $\mathbb{CP}^2$.
	
	The next step is to introduce $n-5$ deformation parameters $\{ z_1,z_2,\ldots ,z_{n-5}\}$ to obtain a complete parameterization of $X(3,n)$, 
	\be\label{repreM} 
	\left[ \begin{array}{ccccccccc}
		1 & 0 & 0 & 1+z_1 & x_2+z_1z_2 & x_2-x_3+z_1z_2z_3  & x_2-x_4+z_1z_2z_3z_4 & \cdots & x_2-x_{n-3}  \\
		0 & 1 & 0 & 1 & x_1 & x_1-x_3  & x_1-x_4 & \cdots & x_1-x_{n-3} \\
		0 & 0 & 1 & 1 &  1  &  1-x_3   & 1-x_4   & \cdots & 1 - x_{n-3} 
	\end{array}  \right] .
	\ee 
	The CEGM construction proceeds by connecting the space of kinematic invariants with $X(3,n)$ by computing the critical points of the function,
	\be 
	{\cal S}^{(3)}_n := \sum_{a<b<c}\s_{abc} \log \Delta_{abc} ,
	\ee 
	where $\Delta_{abc}$ are the minors of the matrix representative of a point in $X(3,n)$.
	
	Computing ${\cal S}^{(3)}_n$ and expanding reveals the following structure
	\be\label{struc} 
	{\cal S}^{(3)}_n = \ldots + \t_{45\ldots n}\log z_1 + \t_{56\ldots n}\log z_2 + \t_{67\ldots n}\log z_3 + \ldots +\s_{n-2,n-1,n} \log z_{n-5}. 
	\ee 
	Here the first ellipses stand for terms that remain finite as $z_i=0$. This form shows that as a given kinematic invariant is taken to zero the corresponding deformation parameter must as well vanish. In order to make the connection manifest, let us introduce the notation $\{ J_1,\ldots ,J_{n-5}\}$ for the sets of labels relevant in \eqref{struc}, i.e. 
	\be  
	\sum_{i=1}^{n-5} \t_{J_i}\log z_i :=\t_{45\ldots n}\log z_1 + \t_{56\ldots n}\log z_2 + \t_{67\ldots n}\log z_3 + \ldots +\s_{n-2,n-1,n} \log z_{n-5}. 
	\ee 

	Next, consider the scattering equations, i.e., the equations for computing the critical points of ${\cal S}^{(3)}_n$. Recalling that we are interested in the behavior near the subspace defined by \eqref{poles}, the equations can be separated into two groups, 
	\be\label{GxGz} 
	G_x=\left\{\frac{\partial {\cal S}^{(3)}_n}{\partial x_i} =0,\quad i\in \{1,2,\ldots ,n-3\}\right\},\quad G_z=\left\{\frac{\partial {\cal S}^{(3)}_n}{\partial z_i} =0,\quad i\in \{1,2,\ldots ,n-5\}\right\}.
	\ee
	It is convenient to keep all kinematics invariants that do not participate in \eqref{poles} fixed and so treat \eqref{poles} as the equations that define a point in $\mathbb{C}^{n-5}$. In the remainder of this section we adopt this viewpoint.
	
	At the point \eqref{poles} the first set, $G_x$, becomes a system of equations for only $x$-variables, i.e., all $z$-variables drop out. These will be shown to be nothing but the scattering equations of a $k=2$ CHY function. Using \eqref{struc}, the equations in $G_z$ have terms of the form $\t_{J_i}/z_i$ indicating that as $\t_{J_i}\to 0$ the ratio should be kept finite and therefore we write $z_i = \t_{J_i}q_i$. In the limit as the point \eqref{poles} is approached, the $G_z$ equations simplify and become
	\be\label{singular}  
	\frac{1}{q_i} + F_i(x) = 0, \quad i\in \{1,2, \ldots ,n-5\} ,
	\ee 
	for some functions $F_i(x)$. Therefore for each solution to the $G_x$ equations one finds a unique solution to the $G_z$ equations. 
	
	The final step is to compute the Jacobian matrix, usually denoted by, ${\rm det}'\Phi^{(3)}_n$. This is closely related to the determinant of the Hessian matrix of ${\cal S}^{(3)}_n$. In order to make the connection precise, it is most convenient to use $GL(3)$ and torus action transformations to bring the matrix representative \eqref{repreM} into a canonical frame,
	\be\label{canon} 
	\left[ \begin{array}{ccccccccc}
		1 & 0 & 0 & 1 & \frac{x_2+z_1z_2}{1+z_1} & \frac{x_2-x_3+z_1z_2z_3}{(1+z_1)(1-x_3)}  & \frac{x_2-x_4+z_1z_2z_3z_4}{(1+z_1)(1-x_4)} & \cdots & \frac{x_2-x_{n-3}}{(1+z_1)(1 - x_{n-3})}  \\
		0 & 1 & 0 & 1 & x_1 & \frac{x_1-x_3}{1-x_3}  & \frac{x_1-x_4}{1-x_4} & \cdots & \frac{x_1-x_{n-3}}{1 - x_{n-3}} \\
		0 & 0 & 1 & 1 &  1  &  1  & 1   & \cdots & 1 
	\end{array}  \right] .
	\ee 
	In this frame ${\rm det}'\Phi^{(3)}_n$ is the determinant of the Hessian matrix of ${\cal S}^{(3)}_n$. Ordering the variables as $\{x_1,x_2,\ldots, x_{n-3},z_1,z_2,\ldots, z_{n-5}\}$, the matrix of second derivatives has the block form
	\be\label{hessian}
	\left[ \begin{array}{ccc|ccc}  & & & & &  \\ & \frac{\partial^2 {\cal S}^{(3)}_n}{\partial x_i\partial x_j} & & & \frac{\partial^2 {\cal S}^{(3)}_n}{\partial x_i\partial z_j} & \\ & & & & &  \\ \hline & & & & &  \\ &
		\frac{\partial^2 {\cal S}^{(3)}_n}{\partial z_i\partial x_j} & & & \frac{\partial^2 {\cal S}^{(3)}_n}{\partial z_i\partial z_j} & \\ 
		& & & & &   \end{array} \right] .
	\ee 
	Using the form of the $G_z$ equations \eqref{singular} it is easy to see that the off-diagonal blocks and the top left block are all finite as the point \eqref{poles} is approached while the bottom right block becomes 
	\be 
	\left[ \frac{\partial^2 {\cal S}^{(3)}_n}{\partial z_i\partial z_j} \right] = 
	\left[ \begin{array}{cccc} 
		\frac{1}{\t_{J_1}q_1^2} & * & \cdots & * \\ 
		* & \frac{1}{\t_{J_2}q_2^2} & \cdots & * \\
		* & * & \cdots &\frac{1}{\t_{J_{n-5}}q_{n-5}^2} \\ 
	\end{array}  \right] .
	\ee 
	Here $*$ denotes terms that remain finite.
	
	Since we are only interested in the leading order divergence near the point \eqref{poles}, the determinant of \eqref{hessian} is given by
	\be\label{weri} 
	\left(\prod_{i=1}^{n-5}\frac{1}{\t_{J_i}q_i^2}\right){\rm det}\left[ \frac{\partial^2 {\cal S}^{(3)}_n}{\partial x_i\partial x_j} \right] .
	\ee 
	The goal is to show that the remaining determinant is exactly that of the CHY formula for $m^{(2)}_n$.
	
	A preliminary step is to identify the parameterization of $X(2,n)$ that corresponds to \eqref{embed2n}. This can be done by analysing the structure of the $k=3$ Parke-Taylor factor and the way it reduces to a $k=2$ object.
	
	The $k=3$ Parke-Taylor function is defined as
	\be  
	{\rm PT}^{(3)}:= \frac{1}{\Delta_{123}\Delta_{234}\cdots \Delta_{n12}}.
	\ee 
	In order to connect to the CEGM formula for $m^{(3)}_n$ one has to compute the jacobian of the change of variables from the entries of the matrix in \eqref{canon} to 
	\be\label{cegm} 
	\left[ \begin{array}{ccccccccc}
		1 & 0 & 0 & 1 & y_{15}  & y_{16} & \cdots & y_{1n}  \\
		0 & 1 & 0 & 1 &  y_{25}  & y_{26} & \cdots & y_{2n} \\
		0 & 0 & 1 & 1 &   1  & 1   & \cdots & 1 
	\end{array}  \right] .
	\ee 
	Denoting this jacobian as ${\bf J}_{(x,z):y}$ and using \eqref{weri} one finds 
	\be 
	{\rm det}'\Phi^{(3)}_n = 
	\left(\prod_{i=1}^{n-5}\frac{1}{\t_{J_i}q_i^2}\right){\rm det}\left[ \frac{\partial^2 {\cal S}^{(3)}_n}{\partial x_i\partial x_j} \right] \left({\bf J}_{(x,z):y}\right)^{-2}.
	\ee
	
	The CEGM formula is given by
	\be 
	m^{(3)}_n = \sum_{{\rm solutions}~{\rm of}~G_x,G_z} \frac{1}{{\rm det}'\Phi^{(3)}_n}\left({\rm PT}^{(3)}\right)^2.  
	\ee
	Substituting the formula for ${\rm det}'\Phi^{(3)}_n$ one finds the leading order
	\be\label{almost} 
	\left. m^{(3)}_n\right|_{\rm Leading\, order} = \sum_{{\rm solutions}~{\rm of}~G_x,G_z} \left(\prod_{i=1}^{n-5}\t_{J_i}q_i^2\right)\left({\rm det}\left[ \frac{\partial^2 {\cal S}^{(3)}_n}{\partial x_i\partial x_j} \right]\right)^{-1}\left( {\bf J}_{(x,z):y} {\rm PT}^{(3)}\right)^2.
	\ee 
	A straightforward computation reveals that 
	\be\label{ppt} 
	\left( {\bf J}_{(x,z):y} {\rm PT}^{(3)}\right)^2 =\left(\prod_{i=1}^{n-5}\frac{1}{\t_{J_i}q_i}\right)^2\left(\frac{1}{(x_1-1)(x_1-x_2)x_3\prod_{j=3}^{n-4}(x_i-x_{i+1})(x_{n-3}-1)}\right)^2 
	\ee 
	This means that if we could identify the Hessian in \eqref{almost} with ${\rm det}'\Phi^{(2)}_n$ and the second factor in \eqref{ppt} with ${\rm PT}^{(2)}$, then we would have succeeded in relating $m^{(3)}_n$ to $m^{(2)}_n$. Once again, this is easily done once an appropriate parameterization of $X(2,n)$ is found. 
	
	Consider the following  
	\be\label{para2n} 
	\left[ \begin{array}{ccccccccc}
		1 & 0 & 1 & x_2 & x_2-x_3  & x_2-x_4 & \cdots & x_2-x_{n-3} & x_2-1  \\
		0 & 1 & 1 & x_1 & x_1-x_3  & x_1-x_4 & \cdots & x_1-x_{n-3} & x_1-1
	\end{array}  \right] .
	\ee 
	In order to connect to the CHY formula for $m^{(2)}_n$ the simplest approach is to bring \eqref{para2n} to a canonical frame, and then compute the jacobian of the change of variables to the entries of the matrix ${\bf J}^{(2)}_{x:y}$ and the Parke-Taylor function
	\be  
	{\rm PT}^{(2)}:= \frac{1}{\Delta_{12}\Delta_{23}\cdots \Delta_{n1}}.
	\ee 
	The combination 
	$${\bf J}^{(2)}_{x:y}\, {\rm PT}^{(2)} =\frac{1}{(x_1-1)(x_1-x_2)x_3\prod_{j=3}^{n-4}(x_i-x_{i+1})(x_{n-3}-1)} $$
	which is precisely what is needed. 
	
	The last step is to identify how the function ${\cal S}^{(3)}_n$ is related to 
	\be\label{chyPot2} 
	{\cal S}^{(2)}_n = \sum_{a<b} s_{ab}\log \Delta_{ab}
	\ee 
	for some choice of $s_{ab}$. 
	
	We show the explicit map in the next subsection so let us close this discussion by summarizing. We have shown that near the point \eqref{poles}, the leading order behaviour of $m^{(3)}_n$ is 
	\be\label{almostT} 
	\left. m^{(3)}_n\right|_{\rm Leading\, order}  = \left(\prod_{i=1}^{n-5}\frac{1}{\t_{J_i}}\right)\sum_{{\rm solutions}~{\rm of}~G_x} \left({\rm det}\left[ \frac{\partial^2 {\cal S}^{(3)}_n}{\partial x_i\partial x_j} \right]\right)^{-1}\left( {\bf J}^{(2)}_{x:y} {\rm PT}^{(2)}\right)^2.
	\ee 

	\subsection{Identification of Kinematic Invariants}
	
	We now have to show that in the limit $\t_{J_i}\to 0$ and with $z_i =\t_{J_i}q_i$, the CEGM function 
	\be 
	{\cal S}^{(3)}_n := \sum_{a<b<c}\s_{abc}\rm \log \Delta_{abc}
	\ee 
	maps to its $k=2$ version \eqref{chyPot2}. These functions are invariant under their corresponding ${\rm GL}(k,n)$ and torus actions so we can choose the presentation of their matrix representatives. For $X(3,n)$ we work with \eqref{repreM}.
	
	The limit $\t_{J_i}\to 0$ is taking by expanding around $z_i=\t_{J_i} q_i$. The limit is finite since the only potentially singular pieces are those extracted in \eqref{struc}. Explicitly computing the structure of the minors one discovers that 
	\be\label{strucLimit} 
	{\rm lim}_{\t_{J_i}\to 0}\, {\cal S}^{(3)}_n \! =\! \sum_{a=4}^n \s_{12a}\log (x_{a-3}-1) + \sum_{a=4}^n \s_{13a}\log (x_{a-3}-x_1)+\sum_{a=4}^n \s_{23a}\log (x_{a-3}-x_2)+\ldots.
	\ee 
	
	Every single minor becomes a rational function of $x_a$, $x_a-1$ or $x_a-x_b$. This is exactly the structure of ${\cal S}^{(2)}_n$ using the parameterization \eqref{para2n}. 
	
	Imposing that 
	\be\label{potCon} 
	{\rm lim}_{\t_{J_i}\to 0}\, {\cal S}^{(3)}_n ={\cal S}^{(2)}_n
	\ee 
	as functions of $x_i$ implies that the coefficients of each logarithm on each side must agree. This is a system of linear equations for the rank-three kinematic invariants $\s_{abc}$. 
	
	The solution is most easily presented by shifting the labels of the $k=2$ invariants, i.e., $s_{a,b}\to s_{a+1,b+1}$. This is clearly irrelevant as any action of the dihedral group on $n$ elements gives the same $m^{(2)}_n$. The solution is then
	\begin{align}\label{fullmap} 
		&    \s_{123} = s_{12}+s_{13}+s_{23},\quad \s_{12a} = s_{1a},\quad \s_{13a} = s_{2a},\quad  \s_{23a} = s_{3a} \quad a\in \{ 4,5,\ldots ,n\} ,
		\\
		& \nonumber \s_{a,a+1,b} = s_{a+1,b} - \sum_{c=1}^{a-1}\s_{a+1,b,c} \quad a\in \{ 3,4,\ldots ,n-4\} \quad b \in \{ a+2,a+3,\ldots ,n \}. 
	\end{align}

	Using this map of kinematic invariants and \eqref{potCon} it is clear that 
	\be 
	{\rm lim}_{\t_{J_i}\to 0}\, \left({\rm det}\left[ \frac{\partial^2 {\cal S}^{(3)}_n}{\partial x_i\partial x_j} \right]\right) = {\rm det}\left[ \frac{\partial^2 {\cal S}^{(2)}_n}{\partial x_i\partial x_j} \right] = \left({\bf J}^{(2)}_{x:y}\right)^2 {\rm det}'\Phi^{(2)}_n.
	\ee 
	Combining this with \eqref{almostT} the proof of theorem \ref{residual3} is complete since
	\be\label{finn}
	\left. m^{(3)}_n\right|_{\rm Leading\, order}  = \left(\prod_{i=1}^{n-5}\frac{1}{\t_{J_i}}\right)\sum_{{\rm solutions}~{\rm of}~G_x} \left({\rm det}'\Phi^{(2)}_n \right)^{-1}\left( {\rm PT}^{(2)}\right)^2 =\left(\prod_{i=1}^{n-5}\frac{1}{\t_{J_i}}\right)\, m^{(2)}_n .
	\ee
	
	\subsection{Examples}\label{sec: examples}
	
	The simplest possible case is for $n=5$. Clearly no residues are needed since $X(3,5)\sim X(2,5)$. Nevertheless let us illustrate the map of kinematic invariants as it is not the standard one.
	
	Choosing five invariants as basis one finds from \eqref{fullmap}
	\be 
	\s_{123} = s_{12}+s_{23}+s_{13},\quad \s_{124}=s_{14},\quad \s_{125}=s_{15}, \quad \s_{134}=s_{24},\quad 
	\s_{135} = s_{25}.
	\ee 
	Let us check that the planar invariants map correctly by using momentum conservation,
	\be
	\s_{123} = s_{45}, \quad \s_{234} = s_{34}, \quad \s_{345} = s_{23}, \quad \s_{451} = s_{12}, \quad \s_{512} = s_{51}.  
	\ee
	This shows that the map \eqref{fullmap} takes the $(3,5)$ kinematic space into a reflection of the $(2,5)$ kinematic space. Of course, amplitudes are invariant under the dihedral group, even though the kinematic twist is not for $n\ge 6$.
	
	The next example is less trivial. Consider $m^{(3)}_6$ and its residues at $s_{456}=0$. This generalized amplitude was first computed in \cite{Cachazo:2019ngv} and can be written as a sum over $48$ generalized Feynman diagrams. Only $14$ of them possess a pole at $s_{456}=0$. This is exactly the number of Feynman diagrams in $m^{(2)}_6$. Recall that each $k=3$ generalized Feynman diagram has four propagators and therefore after computing the residue at $s_{456}=0$ the fourteen surviving diagrams have three propagators. Moreover, under the map \eqref{fullmap}, each term becomes one $k=2$ Feynman diagram term. This behaviour shows that the relations between residues of $m^{(3)}_n$ and $m^{(2)}_n$ is actually a combinatorial statement about the structures of the corresponding generalized associahedra or positive tropical Grassmannian descriptions. 
	
	In \eqref{fullmap} we have only provided the map for the invariants that appear after the residue is computed modulo momentum conservation. Let us present the complete list of 20 invariants 
	\begin{align}\label{map36to26}
		& \s_{123}= s_{12}+s_{13}+s_{23},\, \s_{124}= s_{14},\,\s_{125}= s_{15},\s_{126}= s_{16}, \,\s_{134}= s_{24},\,\s_{135}= s_{25}, \\ \nonumber
		& \s_{136}= s_{26},\,\s_{156}= s_{12}-\s_{145}-\s_{146},\s_{234}=
		s_{34},\, \s_{235}= s_{35},\, \s_{236}= s_{36}, \\ \nonumber & \s_{256}=
		s_{13}-\s_{245}-\s_{246},\, \s_{345}=
		s_{45}-\s_{145}-\s_{245},\, \s_{346}=
		s_{46}-\s_{146}-\s_{246},\\\nonumber & \s_{356}=
		-s_{12}-s_{13}+s_{56}+\s_{145}+\s_{146}+\s_{245}+\s_{246},\,\s_{456}=
		0.
	\end{align}
	%
	Note that $\s_{145},\s_{146},\s_{245},\s_{246} $ are arbitrary and can be set to zero.
	
	Consider one of the $48$ terms of $m^{(3)}_6$ and compute its residue
	\be  
	\oint_{|\s_{456}|=\epsilon}d \s_{456} \frac{1}{\s_{612}\t_{3456}(\t_{3456}+\s_{134}+\s_{234})\s_{456}} =\frac{1}{\s_{612}\t_{3456}(\t_{3456}+\s_{134}+\s_{234})} =  \frac{1}{s_{23}t_{234}s_{61}},
	\ee 
	where the second equality was obtained by applying the map \eqref{map36to26} from $k=3$ to $k=2$ Mandelstam invariants. Here $t_{234}=s_{23}+s_{24}+s_{34}$.

	\section{Different Paths and an Emergent Amplitude}\label{sec: paths emergent amp}
	
	Up to this point we have discussed a single $(n-5)$-dimensional residue of $m^{(3)}_n$ that gives $m^{(2)}_n$. Of course, any other residue related to \eqref{poles} via a cyclic shift in the indices also leads to $m^{(2)}_n$. However, even after fixing the kinematic invariant with the longest width, e.g., by choosing $\t_{4,5,\ldots ,n}$, we find exactly $C_{n-5}$ different residues. The appearance of the Catalan numbers is not an accident. Each $(n-5)$-dimensional residue can be defined from a triangulation of a $(n-3)$-gon. In our example, the vertices of the polygon are given, in order, by $\{ 4,5,\ldots , n \}$.  
	
	\begin{prop}\label{propA}
		For any given triangulation of the $(n-3)$-gon with vertices $\{ 4,5,\ldots , n \}$, associate a kinematic invariant, $\t_J$, to a diagonal between vertices $a$ and $b$ with $a<b$ by setting $J = \{ a,a+1,\ldots ,b-1,b\}$. Then, the $(n-5)$-dimensional residue of $m^{(3)}_n$ defined by
		$$(\t_{J_0},\t_{J_1},\t_{J_2}, \ldots ,\t_{J_{n-6}}) = (0,0,0,\ldots ,0), $$
		with $J_0=\{4,5,\ldots ,n\}$ and $\t_{J_1},\t_{J_2}, \ldots ,\t_{J_{n-6}}$ obtained from the $n-6$ diagonals, is equal to $m^{(2)}_n$ under some identification of kinematic invariants.
	\end{prop}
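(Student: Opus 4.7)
The plan is to generalize the proof of Theorem \ref{residual3} by replacing the chain parameterization \eqref{repreM}, which is adapted to the fan triangulation from vertex $n$, with one adapted to the given triangulation $T$. Fix the boundary edge $(4,n)$ of the $(n-3)$-gon as a root and regard the dual graph of $T$ as a rooted binary tree with $n-5$ internal edges, one for each invariant $\t_{J_i}$ in the proposition. Because the diagonals of $T$ are non-crossing, the sets $J_i$ form a laminar family: any two are either disjoint or nested. I would introduce $n-5$ deformation parameters $z_1,\ldots,z_{n-5}$, one per internal edge, and build the matrix representative of $X(3,n)$ so that the limit $z_i\to 0$ forces the points indexed by $J_i$ to become collinear with a reference line determined by the smallest $J_j\supsetneq J_i$ in the family (or the root $J_0$ if no such $J_j$ exists). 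This generalizes the nested product $z_1 z_2\cdots z_k$ appearing in column $k+3$ of \eqref{repreM}, which is the special case where $T$ is a chain.

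The key analytic input is the structural identity
\[
{\cal S}^{(3)}_n = (\mathrm{finite}) + \sum_{i=1}^{n-5}\t_{J_i}\log z_i,
\]
the analog of \eqref{struc}. By construction, the minors $\Delta_{abc}$ that acquire a $\log z_i$ singularity are exactly those indexed by triples $\{a,b,c\}\subset J_i$, producing the coefficient $\t_{J_i}$ as the sum of $\s_{abc}$ over such triples. Once this identity is verified, every subsequent step of the proof of Theorem \ref{residual3} applies verbatim: the scattering equations split into a $z$-independent block $G_x$ and a decoupling block $G_z$ which, after rescaling $z_i=\t_{J_i}q_i$, reduces to \eqref{singular}; the Hessian \eqref{hessian} develops the dominant $(\t_{J_i}q_i^2)^{-1}$ singularities along its bottom-right diagonal; and the Parke-Taylor-times-Jacobian combination \eqref{ppt} isolates the $k=2$ object, yielding the factorized form \eqref{finn}.

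The $T$-dependent analog of the kinematic map \eqref{fullmap} is then obtained exactly as in Section 3.1, by imposing $\lim_{\t_{J_i}\to 0}{\cal S}^{(3)}_n = {\cal S}^{(2)}_n$ in a suitable $X(2,n)$ parameterization and matching coefficients of $\log(x_a-x_b)$; this produces a unique expression for the $\s_{abc}$ in terms of the $s_{ab}$, depending on $T$. The main obstacle is the first step: constructing the $T$-adapted parameterization and checking that each $z_i$ enters exactly the matrix columns needed to produce the coefficient $\t_{J_i}$ in the structural identity. This is a combinatorial rather than an analytic difficulty, and can likely be handled by induction on the depth of the rooted tree dual to $T$, with the chain case of Theorem \ref{residual3} serving as the base.
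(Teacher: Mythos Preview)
Your proposal is correct in spirit and follows essentially the same analytic strategy as the paper: find a parameterization of $X(3,n)$ in which the approach to the chosen residue is governed by $n-5$ deformation parameters $z_i$, establish the structural identity ${\cal S}^{(3)}_n = (\text{finite}) + \sum_i \t_{J_i}\log z_i$, and then recycle the block-Hessian and Parke-Taylor computations from the proof of Theorem~\ref{residual3}. The paper's sketch does exactly this.

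The one organizational difference is worth noting. You build the parameterization directly from the laminar family of diagonals of $T$, via the rooted dual tree. The paper instead keeps a single redundant parameterization \eqref{genM} with auxiliary variables $w_1,\ldots,w_{n-3}$, solves for the $w_i$ in terms of the $n-5$ \emph{consecutive} planar minors $\Delta_{456},\Delta_{567},\ldots,\Delta_{n-2,n-1,n}$, and then assigns the nested products $z_1,z_1z_2,\ldots$ to these minors in some order. Different orderings force different sequences of collinearities in $\mathbb{CP}^2$ and hence different $\t_J$'s to vanish; the paper then observes that the $(n-5)!$ orderings collapse to $C_{n-5}$ distinct residues, one per triangulation. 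So the paper's route has the side benefit of explaining \emph{why} the residues are counted by Catalan numbers (via an explicit surjection $S_{n-5}\to\{\text{triangulations}\}$), whereas your route is more economical for the proposition as stated, since it starts from the triangulation and goes straight to the adapted parameterization. Both land on the same structural identity, after which the argument is identical.
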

	Here we only provide a sketch of a proof. Consider a more general (and redundant) version of the parameterization \eqref{repreM} given by
	\be\label{genM} 
	\left[ \begin{array}{ccccccccc}
		1 & 0 & 0 & 1+w_1 & x_2+w_2 & x_2-x_3+w_3  & x_2-x_4+w_4 & \cdots & x_2-x_{n-3}+w_{n-3}  \\
		0 & 1 & 0 & 1 & x_1 & x_1-x_3  & x_1-x_4 & \cdots & x_1-x_{n-3} \\
		0 & 0 & 1 & 1 &  1  &  1-x_3   & 1-x_4   & \cdots & 1 - x_{n-3} 
	\end{array}  \right] .
	\ee 
	The next step is to compute the $n-5$ planar minors of the matrix, $\Delta_{456},\Delta_{567},\ldots \Delta_{n-2,n-1,n}$, in order to express $w_1,w_2\ldots ,w_{n-3}$ in terms of them. Of course, there are two redundant $w$'s but they can be set to zero to simplify the computation in different ways depending on the residue of interest. In this way 
	\be 
	w_i=w_i(\Delta_{456},\Delta_{567},\ldots ,\Delta_{n-2,n-1,n}).
	\ee 
	Different $(n-5)$-dimensional residues are obtained by selecting the way the planar minors depend on the deformation parameters. For example, the residue studied in section \ref{particularR} is obtained by setting 
	\be  
	\Delta_{456} =z_1,~\Delta_{567}=z_1z_2,~\Delta_{678}=z_1z_2z_3, \ldots ,\Delta_{n-2,n-1,n}=\prod_{i=1}^{n-5}z_i.
	\ee 
	The attentive reader might notice that there are $(n-5)!$ ways of assigning the $z$-dependence while the claim is that there are only $C_{n-5}$ distinct residues. 
	
	The way $(n-5)!$ assignments reduce to $C_{n-5}$ is by considering the way a given pattern implies configurations of points on $\mathbb{CP}^2$ and the corresponding ``dual'' kinematic invariant needed to be set to zero to keep the function ${\cal S}^{(3)}_n$ finite. 
	
	This is best explained with an example. Consider the $n=8$ case and let us study the $(8-5)!=6$ ways of identifying $\Delta_{456},\Delta_{567},\Delta_{678}$ with $z_1,z_1z_2,z_1z_2z_3$. 
	
	Under 
	\be  
	\Delta_{456} =z_1,~\Delta_{567}=z_1z_2,~\Delta_{678}=z_1z_2z_3,
	\ee 
	one sets $\Delta_{678}=0$ first which means that points $6,7,8$ are on a line in $\mathbb{CP}^2$ and therefore one must set $\s_{678}=0$. Next, $\Delta_{567} =0$ but this time imposing that $5,6,7$ are on a line implies that $5,6,7,8$ are on a line and therefore one must set $\t_{5678} =0 $. Finally, $\Delta_{456} = 0$ sets $4$ to be on the line defined by $5$ and $6$ and therefore all five points are collinear and one must set $\t_{45678}=0$.  
	
	Consider now 
	\be  
	\Delta_{456} =z_1z_2,~\Delta_{567}=z_1,~\Delta_{678}=z_1z_2z_3.
	\ee 
	As before, we have $6,7,8$ on a line with $\s_{678}=0$. Next $4,5,6$ must be on a line. Note that this time this does not imply any more conditions and therefore one simply sets $\s_{456} =0$. Finally, requiring $5,6,7$ to be collinear forces the lines $4,5,6$ and $6,7,8$ to coincide and therefore all five points are collinear so one must set $\t_{45678}=0$. 
	
	Now it is obvious that the pattern 
	\be  
	\Delta_{456} =z_1z_2z_3,~\Delta_{567}=z_1,~\Delta_{678}=z_1z_2
	\ee 
	leads to the same residue as above, i.e., $\s_{456}=\s_{678}=\t_{45678}=0$. 
	
	It is also clear that all residues share $\t_{45678}$ as the end point configuration will always have the $n-5$ points collinear in $\mathbb{CP}^2$. 
	
	Note that this provides a surjection from the set of permutations of $(n-5)$ elements onto the set of triangulations of a $(n-3)$-gon.
	
	The reader familiar with the Grassmannian formulation of SYM will notice that for NMHV amplitudes the pattern of $(n-5)$-dimensional residues is also closely related to Catalan numbers. In fact, it is a refinement known as Narayana numbers\footnote{See e.g. {\tt https://en.wikipedia.org/wiki/Narayana_number}} \cite{Arkani-Hamed:2012zlh}, see \cite{oeisNarayana}. For $k=3$, it is known that each BCFW diagram localizes on a $\mathbb{CP}^2$ in twistor space (i.e. $\mathbb{CP}^3$) \cite{unification}. Each individual BCFW-diagram is obtained by setting a series of planar minors (Plucker coordinates of $G(3,n)$) to zero. Each such vanishing minor implies that the corresponding points become collinear. Still in the SYM context, the $(n-5)$ conditions can be supplemented by partially relaxing momentum conservation to impose that one more planar minor becomes zero, leading to a Cachazo-Svrcek-Witten (CSW) \cite{Cachazo:2004kj} localization \cite{Arkani-Hamed:2009pfk}, i.e., the amplitude is computed by configurations where all particles belong to a pair of lines in $\mathbb{CP}^2$.       
	
	\subsection{An Emergent Amplitude}
	
	Recall the final formula obtained in section \ref{particularR}, \eqref{finn}, now written in the notation of Proposition \ref{propA},
	\be
	\left. m^{(3)}_n\right|_{\rm Leading\, order}  =  \frac{1}{\t_{J_0}}\left(\prod_{i=1}^{n-6}\frac{1}{\t_{J_i}}\right)\, m^{(2)}_n .
	\ee
	This same formula applies to all $C_{n-5}$ residues. This means that if we consider the subspace of kinematic invariants where $\s_{abc} \to \epsilon\,  \hat\s_{abc}$ for all $\{a,b,c\}\subset \{4,5,\ldots ,n\}$, then we get a formula for the leading order behaviour of $m^{(3)}_n$ of the form
	\be\label{twoAmps} 
	\left. m^{(3)}_n\right|_{\rm Leading\, order}  =\frac{1}{\epsilon^{n-5}\hat\t_{J_0}}\left(\sum_{r=1}^{C_{n-5}} \prod_{i=1}^{n-6}\frac{1}{\hat\t_{J^{(r)}_i}} \right) \, m^{(2)}_n. 
	\ee 
	Here we have combined all terms assuming that the $m^{(2)}_n$ can be chosen to be the same for all. This is possible since the map found in \eqref{fullmap} is consistent with setting all kinematic invariants of the form $\s_{abc}$ with $\{a,b,c\}\subset \{4,5,\ldots ,n\}$ to zero. 
	
	Given the bijection with triangulations of a $(n-3)$-gon, the factor in \eqref{twoAmps} can be identified with an amplitude, i.e.
	\be 
	\left(\sum_{r=1}^{C_{n-5}} \prod_{i=1}^{n-6}\frac{1}{\hat\t_{J^{(r)}_i}} \right) = m^{(2)}_{n-3}
	\ee 
	for some identification of kinematic invariants. 
	
	This implies that under the simultaneous limit the leading order of the amplitude becomes 
	\be\label{paraQ}  
	\left. m^{(3)}_n\right|_{\rm Leading\, order}  =\frac{1}{\epsilon^{n-5}\hat\t_{J_0}}\, m^{(2)}_{n-3}\, m^{(2)}_n.
	\ee 

	\subsection{Parallel Hard Limit}\label{sec: parallelHardLimit}
	
	The form of \eqref{paraQ} is reminiscent of formulas for the so-called ``soft'' and ``hard'' limits \cite{GarciaSepulveda:2019jxn}. In the former, all kinematic invariants containing a given particle are taken to zero while in the latter those that do not contain the given particle are the ones taken to zero. It was found in \cite{GarciaSepulveda:2019jxn} that 
	\be  
	\left. m^{(k)}_n\right|_{\rm soft}  = m^{(2)}_{k+2} m^{(k)}_{n-1},\quad \left. m^{(k)}_n\right|_{\rm hard} =  m^{(2)}_{n-k+2} m^{(k-1)}_{n-1}.
	\ee 
	In these formulas the first factor is recognized as an amplitude only after an appropriate identification of kinematic invariants while the second factor is straightforwardly an amplitude with kinematics inherited from the original amplitude via the corresponding limit. 
	
	As it turns out, it is possible to associate the kinematics leading to \eqref{paraQ}, i.e. 
	\be  
	\left. m^{(3)}_n\right|_{\rm Leading\, order}  =\frac{1}{\epsilon^{n-5}\hat\t_{J_0}}\, m^{(2)}_{n-3}\, m^{(2)}_n.
	\ee 
	with what can be called a {\it parallel hard limit} on particles $\{1,2,3\}$ or, equivalently, a {\it parallel soft limit} on particles $\{ 4,5,\ldots ,n\}$. This equivalence has to do with the ``parallel'' nature of the limit which we now explain. 
	
	For convenience, we choose the version with the smaller number of particles involved, i.e., the parallel hard limit: Consider the limit in which every kinematic invariant of the form $\s_{1ab},\s_{2ab},\s_{3ab}$, with $a,b\in [n]$, is taken to be large, of order $1/\epsilon$. It is important note that this is not a multiple-hard limit, which is defined by $\s_{abc}\sim \epsilon^{-|\{a,b,c\}\cap \{1,2,3\}|} $. For example, in a multiple-hard limit $\s_{124}\sim \epsilon^{-2}$ while in the parallel hard limit $\s_{124}\sim \epsilon^{-1}$. 
	
	Note that, up to an irrelevant overall factor, this limit is easily seen to be equivalent to the one in which every kinematic invariant of the form $\s_{abc}$ with $\{ a,b,c\}\subset \{4,5,\ldots ,n\}$ is small, i.e. of order $\epsilon$ while others are kept finite. Once again, this is different from a multiple-soft limit (see e.g. \cite{Abhishek:2020xfy,Abhishek:2020sdr}) on the same set of particles as, for example, $\s_{124}$ would also be required to vanish while in the parallel soft limit it remains finite.

	In the next section we start the exploration showing how connections to the hypersimplex and to the planar basis, introduced by the second author in \cite{Early:2019eun}, lead to a proposal, aimed toward classifying the $(k-2)(n-k-2)$-dimensional residues that agree with the physical amplitude, $m^{(2)}_{n}$.

	\section{Generalization to $k\ge 3$: From Residues to Associahedra}\label{sec: generalization higher k proposal}

	In this section, we formulate a conjecture for the identification of $m^{(2)}_n$ with a particular $(k-2)(n-k-2)$-dimensional residue of $m^{(k)}_n$, for any $3\le k\le n-3$.
	
	In order to explain our main formula, that is the generalization of Equation \eqref{finn} to all $k\ge 3$, it is convenient to recall the basis of planar kinematic invariants \cite{Early:2019eun}. Given a subset $J\subseteq \{1,\ldots, n\}$, define $x_J = \sum_{j\in J}x_j$ and $e_J = \sum_{j\in J} e_j$.  Here $\{e_1,\ldots, e_n\}$ is the standard basis for $\mathbb{R}^n$. 
	
	Let $\binom{\lbrack n\rbrack}{k}$ be the set of all $k$-element subsets of $\{1,\ldots, n\}$ and let $\binomial{\lbrack n\rbrack}{k}^{nf}$ be the set of all $k$-element \textit{nonfrozen} subsets, that is, those that are not cyclic intervals modulo $n$.  
	
	The kinematic space $\mathcal{K}(k,n)$ is a codimension $n$ subspace of $\mathbb{R}^{\binom{n}{k}}$, given by 
	\begin{eqnarray}
		\mathcal{K}(k,n) & = & \left\{(\mathfrak{s}) \in \mathbb{R}^{\binom{n}{k}}: \sum_{J\ni j}\s_J = 0,\ j=1,\ldots, n \right\}.
	\end{eqnarray}
	
	\subsection{Multi-Dimensional Residues at $k\ge 4$}\label{pima}
	We would like to find an identification of $m^{(2)}_n$ with some residues of $m^{(k)}_n$, for any $3\le k\le n-3$.  These residues should be $(k-2)(n-k-2)$-dimensional.  However, the analog of the base kinematic invariants $\mathbf{t}_{45\cdots n}$, see Proposition \ref{propA}, involves more complicated as the base needs to include kinematic invariants that are not of the simple form $\mathfrak{t}_{J}$.  In this section, we recall a construction of a basis of kinematic invariants that solves our problem and gives a systematic construction of the base kinematic invariants, with an explicit formula.
	
	\begin{defn}[\cite{Early:2019eun}]
		For any $J\in \binom{\lbrack n\rbrack}{k}^{nf}$, define a linear function on the kinematic space $\eta_J:\mathcal{K}(k,n) \rightarrow \mathbb{R}$, by
		\begin{eqnarray}\label{eq:planar basis element}
			\eta_J(\mathfrak{s}) & = & -\frac{1}{n}\sum_{I\in \binom{\lbrack n\rbrack}{k}}\min\{L_1(e_I-e_J),\ldots, L_n(e_I-e_J)\}\mathfrak{s}_I,
		\end{eqnarray}
		where 
		\begin{eqnarray}\label{eq: L's}
			L_j(x) & = & x_{j+1}+2x_{j+2} + \cdots +(n-1)x_{j-1},
		\end{eqnarray}
		for $j=1,\ldots, n$, are linear functions on $\mathbb{R}^n$.
	\end{defn}

	For $k\ge 4$ the first step is to construct an analog of the base kinematic invariant $\mathfrak{t}_{J_0}$.  For motivation, consider that there is a well-known duality $m^{(k)}_n = m^{(n-k)}_{n}$, and after the following identification of kinematic invariants: for any $I = \{i_1,\ldots, i_k\}$, then one identifies the Mandelstam invariant $\mathfrak{s}_I$ on $\mathcal{K}(k,n)$ with the Mandelstam invariant $\mathfrak{s}_{I^c}$ on $\mathcal{K}(n-k,n)$.
	
	This allows us to construct the base kinematic invariants for $m^{(n-3)}_n$ from the full set of kinematic invariants of $m^{(3)}_n$.  In the planar basis, the kinematic invariants in Equation \eqref{poles} become
	\begin{eqnarray*}
		(\t_{45\ldots n},\ \t_{56\ldots n},\ \t_{67\ldots n},\ldots, \s_{n-2,n-1,n}) & = & (\eta_{3,n-1,n},\eta_{4,n-1,n},\eta_{5,n-1,n},\ldots, \eta_{n-3,n-1,n}).
	\end{eqnarray*}
	According to the general theory developed in \cite{Nick-forthcoming}, any planar basis element of the form $\eta_{j,n-1,n}$ dualizes under the identification $\mathfrak{s}_J \mapsto \mathfrak{s}_{J^c}$ to $\eta_{\{2,\ldots, j\} \cup \{n-j+1,n\}}$.  Then for example
	$$(\eta_{3,6,7},\eta_{4,6,7})$$
	dualizes to 
	$$(\eta_{2,3,6,7},\eta_{2,3,4,7}).$$
	To bring this to the standard form, after cyclically permuting with $i\mapsto i+3\mod(7)$ we obtain
	$$(\eta_{3,4,6,7},\eta_{3,5,6,7}).$$
	These become the base kinematic invariants for (k,n) = (4,7) and for any (4,n), there are two base kinematic invariants
	$$(\eta_{3,4,n-1,n},\eta_{3,n-2,n-1,n}).$$
	In general there are $k-2$ of them.  
	
	For our ansatz, we now define $k-2$ planar kinematic invariants to serve as base poles,
	\begin{eqnarray}\label{eq: base kinematic invariants}
		\text{base}_{k,n} & = & \left\{\eta_{\lbrack 3,i\rbrack\cup \lbrack n-(k-i+1),n\rbrack }: i=3,\ldots, k\right\}.
	\end{eqnarray}
	For instance, 
	$$\text{base}_{3,6} = \{\eta_{3,5,6} \},\ \text{base}_{3,7} = \{\eta_{3,6,7} \},\ \text{base}_{4,7} = \{\eta_{3,5,6,7},\eta_{3,4,6,7} \},$$
	$$\text{base}_{5,10} = \{\eta_{3,7,8,9,10},\eta_{3,4,8,9,10},\eta_{3,4,5,9,10} \}.$$
	Once the base kinematic invariants are given, in order to arrive at an identification of a residue of $m^{(k)}_n$ with $m^{(2)}_n$ one still has to take a dimension $(k-2)(n-k-2)-(k-2)$ residue.  But which?  Certainly not every such residue can be identified with $m^{(2)}_n$, so a priori choosing the residue seems like an enormously challenging problem, due to the complexity of $m^{(k)}_n$.  However, we nonetheless have enough data to formulate an all $(k,n)$ conjecture; we will return to this in \cite{future}.
	
	In order to state the conjecture we need an additional $(k-2)(n-k-2) - (k-2)$ planar kinematic invariants,
	\begin{eqnarray}
		\text{comb}_{k,n} & = &  \left\{ \eta_{\lbrack j,j+r\rbrack \cup \lbrack n-r,n\rbrack}: j=3,\ldots, n-k,\ \text{ and } r=1,\ldots, k-2 \right\}.
	\end{eqnarray}
	Here for convenience we are using the notation $\text{comb}_{k,n}$ to reflect that the collection of index sets is ``combed'' towards $n$.
	
	For example,
	\begin{eqnarray*}
		\text{comb}_{3,9} & = & \{\eta_{489},\eta_{589},\eta_{689}\},\\
		\text{comb}_{4,9} & = & \{\eta_{4789},\eta_{5789},\eta_{4589},\eta_{5689}\},\\
		\text{comb}_{5,9} & = & \{\eta_{46789},\eta_{45789},\eta_{45689}\}.
	\end{eqnarray*}

	\begin{conjecture}\label{conjecture: m2n in mkn}
		The $(k-2)(n-k-2)$-dimensional residue of $m^{(k)}_n$ on the subspace where $\eta_J = 0$ for all $\eta_J \in \text{base}_{k,n}\cup \text{comb}_{k,n}$, can be identified with $m^{(2)}_n$.  That is, there exists an identification of kinematics such that 
		\begin{eqnarray*}
			\text{Res}\lbrack m^{(k)}_n\rbrack_{\eta_J = 0,\ \eta_J \in \text{base}_{k,n}\cup\  \text{comb}_{k,n}} & = & m^{(2)}_n.
		\end{eqnarray*}
	\end{conjecture}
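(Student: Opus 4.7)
The plan is to generalize, step by step, the $k=3$ proof of Theorem~\ref{residual3} given in Section~\ref{particularR}. The goal is to parameterize a neighborhood of the codimension-$(k-2)(n-k-2)$ stratum of $X(k,n)$ cut out by the vanishing of every $\eta_J\in\text{base}_{k,n}\cup\text{comb}_{k,n}$ so that the CEGM integral there reduces directly to the CHY integral for $m^{(2)}_n$.

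First I would construct the analogue of the matrix \eqref{embed2n}. The combinatorial shape of the index sets $[j,j+k-r-2]\cup[n-r,n]$ appearing in $\text{comb}_{k,n}$ dictates a nested ``combed'' configuration in $\mathbb{CP}^{k-1}$: for each $r=1,\ldots,k-2$, the last $r+1$ points $\{n-r,\ldots,n\}$ are forced to lie in a common $(r-1)$-plane, with successive blocks of the middle points aligned so that every corresponding base and comb invariant vanishes. This stratum should have exactly $n-3$ free parameters $x_1,\ldots,x_{n-3}$ matching $\dim X(2,n)$. Atop it I would introduce $(k-2)(n-k-2)$ deformation parameters $z_\alpha$, one per element of $\text{base}_{k,n}\cup\text{comb}_{k,n}$, each opening up one of the collapsed minors in direct analogy with \eqref{repreM}. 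The critical input to verify is that the CEGM potential expands as
\[
{\cal S}^{(k)}_n \;=\; \bigl(\text{regular as all } z_\alpha\to 0\bigr) \;+\; \sum_{\alpha=1}^{(k-2)(n-k-2)} \eta_{J_\alpha}\,\log z_\alpha \;+\;\cdots ,
\]
with $\{J_\alpha\}=\text{base}_{k,n}\cup\text{comb}_{k,n}$, i.e.\ that the coefficient of each $\log z_\alpha$ is precisely the planar basis element $\eta_{J_\alpha}$, not some other linear combination of $\s_I$'s.

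With this singular structure in hand, the scattering equations split into $G_x$ for the $x_i$ and $G_z$ for the $z_\alpha$ as in \eqref{GxGz}; setting $z_\alpha=\eta_{J_\alpha}q_\alpha$ yields a limiting system of the form \eqref{singular}, and each solution of the limit of $G_x$ extends uniquely to one of the full system. The block-Hessian argument \eqref{hessian}--\eqref{weri} then carries over: the $zz$-block has dominant diagonal $1/(\eta_{J_\alpha}q_\alpha^2)$ while the mixed $xz$ and $xx$ blocks stay finite, so the leading order of $\det'\Phi^{(k)}_n$ factorizes as $\prod_\alpha 1/(\eta_{J_\alpha}q_\alpha^2)$ times the $xx$-Hessian, corrected by the squared Jacobian of $(x,z)\mapsto y$ arising from the frame change to \eqref{cegm}. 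A $\GL(k)$ plus torus transformation to a canonical frame reduces $({\bf J}_{(x,z):y}\,{\rm PT}^{(k)})^2$ to $\bigl(\prod_\alpha 1/(\eta_{J_\alpha}q_\alpha)\bigr)^2\,({\bf J}^{(2)}_{x:y}\,{\rm PT}^{(2)})^2$ as in \eqref{ppt}, cancelling the $q_\alpha$ factors and leaving the universal prefactor $\prod_\alpha 1/\eta_{J_\alpha}$. The final step solves the linear system $\lim_{z_\alpha\to 0}{\cal S}^{(k)}_n(x;\s)={\cal S}^{(2)}_n(x;s)$ for the $s_{ab}$ in terms of the $\s_I$, generalizing \eqref{fullmap}.

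The main obstacle will be the first step: constructing the explicit parameterization and proving the factorization pattern of the minors $\Delta_I$ so that the coefficient of $\log z_\alpha$ in ${\cal S}^{(k)}_n$ is exactly $\eta_{J_\alpha}$. For $k=3$ only one layer of collinearity is involved and this check is short; for $k\ge 4$ the nested flats produce many overlapping minors and the combinatorics of tracking which $z_\alpha$'s divide each $\Delta_I$ becomes delicate, and is essentially what the planar-basis machinery of \cite{Early:2019eun} and \cite{Nick-forthcoming} is designed to encode. Once that pattern is established, the Jacobian, Parke--Taylor, and kinematic-identification arguments above should extend essentially verbatim from the $k=3$ proof, closing the conjecture.
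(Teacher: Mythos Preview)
The statement you are attempting to prove is labeled in the paper as a \emph{Conjecture}, not a theorem: the authors explicitly do not prove it, writing that they ``have enough data to formulate an all $(k,n)$ conjecture; we will return to this in \cite{future}.'' Only the $k=3$ case is established (Theorem~\ref{residual3}). So there is no paper proof to compare against; the question is whether your outline actually closes the gap the authors left open.

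It does not. Your proposal is a reasonable strategic template lifted from the $k=3$ argument, but the step you yourself flag as ``the main obstacle'' is the entire content of the conjecture, and you have not carried it out. Concretely: for $k=3$ the singular coefficients of $\log z_i$ in ${\cal S}^{(3)}_n$ are the simple invariants $\t_{J_i}$, and the verification in \eqref{struc} is a short direct computation. For $k\ge 4$ the base invariants in \eqref{eq: base kinematic invariants} are planar-basis elements $\eta_J$ given by the nontrivial formula \eqref{eq:planar basis element}, which are \emph{not} of the form $\t_J$ in general. You have given no construction of the deformation parameters $z_\alpha$ and no argument that the coefficient of $\log z_\alpha$ in ${\cal S}^{(k)}_n$ is precisely $\eta_{J_\alpha}$ rather than some other linear combination of Mandelstam invariants. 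Asserting that ``the planar-basis machinery \ldots\ is designed to encode'' this is not a proof; it is a restatement of what needs to be shown.

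There are further uncompleted steps that are genuinely nontrivial for $k\ge 4$: (i) that the limiting $G_x$ system coincides with the $k=2$ scattering equations on some parametrization of $X(2,n)$; (ii) that the $k$-ary Parke--Taylor function, after the frame change and Jacobian, reduces to $({\bf J}^{(2)}_{x:y}{\rm PT}^{(2)})^2$ times exactly $\prod_\alpha(\eta_{J_\alpha}q_\alpha)^{-2}$; and (iii) that the linear system identifying $s_{ab}$ with combinations of $\s_I$ is consistent. For $k=3$ each of these was checked by explicit computation in Section~\ref{particularR}; you have asserted that they ``should extend essentially verbatim'' without doing any of them. Until at least the parametrization is written down and the $\log z_\alpha$ coefficients are computed, what you have is a plan of attack consistent with the paper's own expectations, not a proof.
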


	\subsection{Combinatorial Simplification and Generalization via the Planar Basis}

	Let us revisit the Example in Section \ref{sec: examples} but now in the planar basis, in order to reveal a hidden combinatorial structure.

		The residue of $m^{(3)}_6$ at the pole $\eta_{356} (=\mathfrak{s}_{456})= 0$ is
		\begin{eqnarray*}
			& & \text{Res}\lbrack m^{(3)}_6\rbrack_{\eta_{356} = 0} \\
			& = &\frac{1}{\eta _{134} \eta _{135} \eta _{136}}+\frac{1}{\eta _{125} \eta _{135} \eta _{235}}+\frac{1}{\eta _{135} \eta _{136} \eta _{235}}+\frac{1}{\eta _{136} \eta _{235} \eta _{236}}+\frac{1}{\eta _{125} \eta _{134} \eta _{135}}\\
			& + & \frac{1}{\eta _{235} \eta _{236} \eta _{256}}+\frac{1}{\eta _{134} \eta _{136} \eta _{346}}+\frac{1}{\eta _{136} \eta _{236} \eta _{346}}+\frac{1}{\eta _{236} \eta _{256} \eta _{346}}+\frac{1}{\eta _{125} \eta _{235} \eta _{256}}\\
			& + & \frac{1}{\eta _{125} \eta _{256} \left(\eta _{124}-\eta _{246}+\eta _{256}+\eta _{346}\right)}+\frac{1}{\eta _{125} \eta _{134} \left(\eta _{124}-\eta _{246}+\eta _{256}+\eta _{346}\right)}\\
			& + & \frac{1}{\eta _{256} \eta _{346} \left(\eta _{124}-\eta _{246}+\eta _{256}+\eta _{346}\right)}+\frac{1}{\eta _{134} \eta _{346} \left(\eta _{124}-\eta _{246}+\eta _{256}+\eta _{346}\right)}.
		\end{eqnarray*}
		This has poles at the vanishing of the set of nine planar kinematic invariants, respectively
		\begin{eqnarray}\label{eq:poles residue 36}
			\eta _{125},\eta _{134},\eta _{135},\eta _{136},\eta _{235},\eta _{236},\eta _{256},\eta _{346},\eta _{124}-\eta _{246}+\eta _{256}+\eta _{346}.
		\end{eqnarray}
		In this case, all of the relevant elements of the planar basis have simple expressions, that is
		\begin{eqnarray*}
			\eta_{135} & = & \frac{1}{2} \mathfrak{s}_{123}+\frac{1}{3} \mathfrak{s}_{124}+\frac{1}{6} \mathfrak{s}_{125}+\mathfrak{s}_{126}+\frac{1}{6} \mathfrak{s}_{134}+\frac{5}{6} \mathfrak{s}_{136}+\frac{5}{6} \mathfrak{s}_{145}+\frac{2}{3} \mathfrak{s}_{146}+\frac{1}{2} \mathfrak{s}_{156}+\mathfrak{s}_{234}\\
			& + & \frac{5}{6} \mathfrak{s}_{235}+\frac{2}{3} \mathfrak{s}_{236}+\frac{2}{3} \mathfrak{s}_{245}+\frac{1}{2} \mathfrak{s}_{246}+\frac{1}{3} \mathfrak{s}_{256}+\frac{1}{2} \mathfrak{s}_{345}+\frac{1}{3} \mathfrak{s}_{346}+\frac{1}{6} \mathfrak{s}_{356}+\mathfrak{s}_{456}\\
			& = & \t_{1236} + \mathfrak{s}_{234} + \mathfrak{s}_{235},
		\end{eqnarray*}
		say, and similarly
		\begin{eqnarray*}
			& \eta_{124} = \mathfrak{t}_{1256},\ \eta_{134} = \mathfrak{s}_{234},
		\end{eqnarray*}
		taking into account momentum conservation for the simplification.  The others are obtained by cyclic index relabeling.
		
		Let us point out that something interesting happens if we additionally assume that  $\mathfrak{s}_{356} = 0$.  The identity
		\begin{eqnarray}
			-\mathfrak{s}_{356} & = & \eta_{356} + \eta_{246} - \eta_{346} - \eta_{256}
		\end{eqnarray}
		leads directly to  
		\begin{eqnarray}
			\eta_{246} & = & \eta_{346} + \eta_{256},
		\end{eqnarray}
		having used that $\eta_{356}=0$, and the residue simplifies very nicely to 
		\begin{eqnarray}\label{eq: NC Equivalence}
			& & \left(\text{Res}\lbrack m^{(3)}_6\rbrack_{\eta_{356}=0}\right)\big\vert_{\mathfrak{s}_{356} = 0} \nonumber\\
			& = & \frac{1}{\eta _{125} \eta _{134} \eta _{135}}+\frac{1}{\eta _{134} \eta _{135} \eta _{136}}+\frac{1}{\eta _{125} \eta _{135} \eta _{235}}+\frac{1}{\eta _{135} \eta _{136} \eta _{235}}+\frac{1}{\eta _{136} \eta _{235} \eta _{236}}\nonumber\\
			& + & \frac{1}{\eta _{124} \eta _{125} \eta _{134}}+\frac{1}{\eta _{125} \eta _{235} \eta _{256}}+\frac{1}{\eta _{235} \eta _{236} \eta _{256}}+\frac{1}{\eta _{124} \eta _{134} \eta _{346}}+\frac{1}{\eta _{134} \eta _{136} \eta _{346}}\\
			& + & \frac{1}{\eta _{136} \eta _{236} \eta _{346}}+\frac{1}{\eta _{124} \eta _{125} \eta _{256}} + \frac{1}{\eta _{236} \eta _{256} \eta _{346}}+\frac{1}{\eta _{124} \eta _{256} \eta _{346}},\nonumber
		\end{eqnarray}
		where we emphasize that the nine planar kinematic invariants appearing above are still linearly independent even after the additional constraint $\s_{356}=0$.  Moreover, one can check that the 14 fractions are naturally in bijection with the maximal pairwise weakly separated collections of the corresponding triples,
		$$\{1,2,4\},\{1,2,5\},\{1,3,4\},\{1,3,5\},\{1,3,6\},\{2,3,5\},\{2,3,6\},\{2,5,6\},\{3,4,6\}.$$
		Is an analogous simplification to Equation \eqref{eq: NC Equivalence} always possible for $m^{(3)}_n$ for all $n\ge 6$ and any of the residues considered above? Indeed, we have found analogous statements for residues of $m^{(3)}_n$ with $n=6,7,8$, except that for $n\ge 7$ the pairwise weak separation criterion for pole compatibility had to be replaced with the noncrossing condition, see for instance \cite{santos2017noncrossing}; in the context of generalized amplitudes see \cite{Early:2021tce}.  
		
		A noncrossing analog $m^{(k,NC)}_n$ of $m^{(k)}_n$ was proposed in \cite{Early:2021tce}; in fact two definitions were given.  One involves a CEGM-like construction with the scattering equations.  The conjecture is that they equivalent.  The other is purely combinatorial as it involves a sum over all maximal pairwise noncrossing collections in $\binom{\lbrack n\rbrack}{k}^{nf}$.  Let us recall briefly recall the latter in the case $k=3$.
		
		A pair $\{\{i_1,j_1,k_1\},\{i_2,j_2,k_2\}\}$ in $\binom{\lbrack n\rbrack}{3}^{nf}$ is called \textit{noncrossing} provided that none of the following conditions hold:
		\begin{enumerate}
			\item $i_1<i_2<j_1<j_2$, or $i_2<i_1<j_2<j_1$,
			\item $j_1<k_2<j_1<k_2$, or $j_2<k_1<j_2<k_1$,
			\item $j_1 = j_2$, and either $i_1<i_2<k_1<k_2$ or $i_2<i_1<k_2<j_1$.
		\end{enumerate}
		For example, according to (3) the pairs 
		$$\{145,236\},\{124,356\} \in \mathbf{NC}_{3,6}$$
		are noncrossing since the middle elements are not the same, respectively $4\not=3$ and $2\not=5$.  It is well-known \cite{santos2017noncrossing} (and in any case easy to see from the definition) that for $k\ge 3$ the noncrossing complex $\mathbf{NC}_{k,n}$ is not cyclically invariant.

		Denote by $\mathbf{NC}_{3,n}$ the collection of all pairwise noncrossing collections of elements of $\binom{\lbrack n\rbrack}{3}^{nf}$.  For general results concerning $\mathbf{NC}_{k,n}$, we refer the reader to \cite{santos2017noncrossing}.  It is known that there are $C^{(3)}_{n-3} = 5,42,462,6006,\ldots $ such maximal (by inclusion) collections for $n=5,6,7,8,\ldots$, where $C^{(3)}_{n-3}$ denotes the 3-dimensional Catalan numbers.  Moreover, every such collection has exactly $(2)(n-4)$ elements, excluding frozen subsets.  Note that in this scheme we recover the 2-dimensional Catalan numbers $C^{(2)}_{n-2} = 2,5,14,42,132,\ldots$ for $n=4,5,6,7,8,\ldots$, as a special case of the $k$-dimensional Catalan numbers $C^{(k)}_{n-k}$, \cite{oeisdimkCatalan}.
		
		Theorem \ref{thm: noncrossing k=3 reduction} concerns $n=6,7,8$; a direct computation of the amplitude for $n\ge 9$ is not feasible using existing techniques; nonetheless we conjecture that the analogous result holds for all $n\ge 6$.  
		\begin{thm}\label{thm: noncrossing k=3 reduction}
			For $n=6,7,8$, given any (maximal) noncrossing collection 
			$$\{(i_1,j_1),\ldots, (i_{n-6},j_{n-6})\} \in \mathbf{NC}_{2,n-3},$$
			then there exists a subspace $\mathcal{K}_0$ of the kinematic space such that
			\begin{eqnarray*}
				\frac{1}{\eta_{3,n-1,n}}\left(\prod_{r=1}^{n-6}\frac{1}{\eta_{i_r+2,j_r+2,j_r+3}}\right)\left(\text{Res}\lbrack m^{(3)}_n\rbrack_{(\eta_{3,n-1,n},\eta_{i_1+2,j_1+2,j_1+3},\ldots) = (0,\ldots, 0)}\right)\big\vert_{\mathcal{K}_0} & = & \sum_{\mathbf{J}}\prod_{J\in \mathbf{J}}\frac{1}{\eta_{J}},
			\end{eqnarray*}
			Here the sum is over all maximal collections $\mathbf{J} = \{J_1,\ldots, J_{2(n-4)}\} \in \mathbf{NC}_{3,n}$ satisfying 
			$$\mathbf{J}\supset \{\{3,n-1,n\},\{i_1+2,j_1+2,j_1+3\},\ldots, \{i_{n-6}+2,j_{n-6}+2,j_{n-6}+3\}\}.$$

		\end{thm}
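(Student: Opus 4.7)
The plan is to apply Proposition \ref{propA} to rewrite the residue as $m^{(2)}_n$ times the prescribed prefactor, to expand $m^{(2)}_n$ as a sum over triangulations of the $n$-gon, and then to choose the kinematic subspace $\mathcal{K}_0$ so that every Feynman-diagram denominator factors cleanly into the planar basis, yielding the claimed sum over maximal noncrossing collections. The final step is then a combinatorial bijection which I would verify by direct computation for the three cases $n=6,7,8$.

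First I would identify the input noncrossing data with geometry: a maximal noncrossing collection $\{(i_r,j_r)\}_{r=1}^{n-6}\in \mathbf{NC}_{2,n-3}$ is exactly the set of diagonals of a triangulation of the $(n-3)$-gon on vertices $\{4,\ldots,n\}$ under the shift $(i,j)\mapsto(i+3,j+3)$. Under the planar-basis dictionary of Section \ref{pima}, the base invariant $\mathfrak{t}_{45\cdots n}$ becomes $\eta_{3,n-1,n}$ and each diagonal-labeled $\mathfrak{t}_{\{a,\ldots,b\}}$ becomes $\eta_{a-1,b-1,b}$, which explains the index shift $\eta_{i_r+2,j_r+2,j_r+3}$ in the statement. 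Proposition \ref{propA} applied to this triangulation then gives that the $(n-5)$-dimensional residue equals $m^{(2)}_n$ modulo the kinematic identification of Section \ref{particularR}, so the left-hand side equals $m^{(2)}_n$ divided by the $n-5$ factors $\eta_{3,n-1,n}\prod_{r}\eta_{i_r+2,j_r+2,j_r+3}$, restricted to $\mathcal{K}_0$.

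Next I would expand $m^{(2)}_n=\sum_{T}\prod_{e}1/\mathfrak{s}_{e}$ as a sum over triangulations $T$ of the $n$-gon; each summand has $n-3$ propagator factors, so after incorporating the $n-5$ prefactor inverses each total term is a product of $2(n-4)$ inverse kinematic invariants. Guided by the $n=6$ calculation preceding equation \eqref{eq: NC Equivalence}, where imposing $\mathfrak{s}_{356}=0$ together with the residue condition $\eta_{356}=0$ forced $\eta_{246}=\eta_{346}+\eta_{256}$ and collapsed every denominator into a product of planar basis elements, I would take $\mathcal{K}_0$ to be cut out by $\mathfrak{s}_J=0$ for every three-element $J\subseteq\{3,\ldots,n\}$ that is not already forced to vanish by the residue conditions themselves. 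On $\mathcal{K}_0$ each Feynman-diagram denominator $\mathfrak{s}_e$ can then be rewritten as an element of the planar basis, so the total product becomes $\prod_{J\in \mathbf{J}}1/\eta_{J}$ for some collection $\mathbf{J}$ of $2(n-4)$ triples containing the base.

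The hard part is the combinatorial identification: as $T$ ranges over the $C_{n-2}$ triangulations of the $n$-gon, the resulting sets $\mathbf{J}$ must biject with the maximal pairwise noncrossing collections in $\mathbf{NC}_{3,n}$ containing $\{\{3,n-1,n\},\{i_1+2,j_1+2,j_1+3\},\ldots,\{i_{n-6}+2,j_{n-6}+2,j_{n-6}+3\}\}$. I would establish this bijection, together with the compatibility of $\mathcal{K}_0$ with the planar-basis factorization, by direct enumeration for $n=6,7,8$: the $n=6$ case is exactly equation \eqref{eq: NC Equivalence}, while $n=7,8$ are within reach of computer algebra applied to the CEGM formula and to the combinatorics of $\mathbf{NC}_{3,n}$ from \cite{santos2017noncrossing,Early:2021tce}. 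The main obstacle to extending the argument beyond $n=8$ is precisely that $m^{(3)}_n$ cannot yet be computed directly for $n\ge 9$, and no closed-form description of the relevant noncrossing collections in that range is currently available, which is why the theorem statement is restricted to $n\le 8$.
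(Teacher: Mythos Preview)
Your outline is correct and ends in the same place as the paper---a case-by-case verification for $n=6,7,8$---but the route is genuinely different. The paper's proof does not pass through Proposition \ref{propA} or $m^{(2)}_n$ at all: it directly computes (or cites the ancillary file for) the residue as a rational function in the $\eta_J$, then exhibits $\mathcal{K}_0$ by a short explicit list of linear relations among the $\eta_J$, tailored separately to each residue (e.g.\ for the $(\eta_{367},\eta_{467})$ residue at $n=7$ one imposes $\eta_{257}=\eta_{267}+\eta_{357}$, $\eta_{247}=\eta_{267}+\eta_{347}$, and $\eta_{357}=\eta_{457}$; for $(3,8)$ and the residue at $(\eta_{378},\eta_{478},\eta_{578})$ there are six such relations), and checks that the restricted residue matches the noncrossing sum term by term.

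Your detour through Proposition \ref{propA} buys an explanation for why the residue should have exactly $C_{n-2}$ terms with $2(n-4)$ inverse factors each, which the paper's brute-force check does not illuminate. The trade-off is that your uniform ansatz for $\mathcal{K}_0$---setting $\mathfrak{s}_J=0$ for all triples $J\subset\{3,\ldots,n\}$---is not what the paper uses: the paper's constraints are few, residue-dependent, and stated purely as relations among the $\eta$'s (such as $\eta_{357}=\eta_{457}$ or $\eta_{368}=\eta_{568}$), and it is not obvious that these follow from your prescription. Since the theorem only asks for the existence of some $\mathcal{K}_0$ and you already plan to verify the planar-basis factorization and the bijection by direct enumeration, any mismatch would surface there; but you should be prepared to replace your uniform guess with residue-specific $\eta$-relations once you actually run the $n=7,8$ computations.
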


		\begin{proof}
			The case $n=6$ was given explicitly in Equation \eqref{eq: NC Equivalence}.
			
			The explicit calculation for $n=7$ is included in an ancillary file to the arXiv version of the paper.  For $n=7$, and the residue where 
			$$(\eta_{367},\eta_{356}) = (0,0)$$
			if we additionally impose the linear relations 
			$$\eta_{257} = \eta_{267} + \eta_{357} \text{ and } \eta_{247} = \eta_{267} + \eta_{347}$$
			then the result holds.
			
			As for the residue where 
			$$(\eta_{367},\eta_{467}) = (0,0),$$
			in addition to 
			$$\eta_{257} = \eta_{267} + \eta_{357} \text{ and } \eta_{247} = \eta_{267} + \eta_{347}$$
			one also has to set
			$$\eta_{357} = \eta_{457}.$$
			For $(3,8)$ and the residue where 
			$$(\eta_{378},\eta_{478},\eta_{578}) = (0,0,0),$$
			if we additionally impose the following conditions
			$$\eta_{358} = \eta_{458},\ \eta_{368} = \eta_{568},\ \eta_{468} = \eta_{568},\ \eta_{248} = \eta_{278}+\eta_{348},\ \eta_{258} = \eta_{278} + \eta_{458},\ \eta_{268} = \eta_{278} + \eta_{568},$$
			then the result follows.  The remaining four residues are similar.
			%
		\end{proof}

		There seems to be a quite analogous story for all $k\ge 4$; details of the general construction are deferred to \cite{future}.

\subsection{Associahedra From Positroid Polytopes}
In this section, we use what we have learned from Theorem \ref{residual3} about the identification of residues with $m^{(2)}_n$ to make a combinatorial digression.

In the computation of the residues considered in this paper, the positive parametrization was not used; but it could have been.  In this case one encounters a triangle of new realizations of associahedra, the $n^\text{th}$ row ranges over $k=2,3,\ldots, n-2$, giving the $(n-3)$-dimensional associahedron.

In order to be consistent with Theorem \ref{residual3}, the Newton polytopes below that we construct should be combinatorially isomorphic to $(n-3)$ dimensional associahedra.

The polytopes under consideration are the Newton polytopes
\begin{eqnarray}\label{eq: standard rank 3 embedding associahedron}
	\text{Newt}\left((x_{1,1}+x_{1,2})\prod_{j=2}^{n-3}\delta_{(\lbrack 1,2\rbrack,\lbrack 1,j\rbrack)}\prod_{1\le a<b\le n-3}(x_{2,a} +x_{2,a+1}+ \cdots x_{2,b})\right),
\end{eqnarray}
where we denote
$$\delta_{(\lbrack 1,2\rbrack,\lbrack 1,j\rbrack)} = x_{1,1}(x_{2,1} + \cdots + x_{2,j}) + x_{1,2}(x_{2,2} + \cdots + x_{2,j}).$$
Now after setting
$$x_{1,1} = y_{1},\ x_{1,2} = y_{2},\text{ and } \ x_{2,1} = y_2,\ x_{2,2} = y_3,\ldots,\  x_{2,n-3} = y_{n-2},$$
it is not difficult to verify that the Newton polytope in Equation \eqref{eq: standard rank 3 embedding associahedron} projects isomorphically to a Minkowski sum of an n-4 dimensional associahedron in Loday's realization \cite{loday2004realization} as a generalized permutohedron \cite{postnikov2009permutohedra}, together with a line segment and a collection of n-4 positroid polytopes in the second hypersimplices $\Delta_{2,j+1}$, that is, after the change of variable, then $$\text{Newt}(\delta_{(\lbrack 1,2\rbrack,\lbrack 1,j\rbrack)})$$ becomes the positroid polytope,
\begin{eqnarray*}
	\left\{x\in \Delta_{2,j+1}: x_1+x_2 \ge  1 \right\},
\end{eqnarray*}
for $j=2,\ldots, n-2$.

For example, for $(k,n) = (3,6)$ we find the Newton polytope
$$
\begin{array}{c}
	x_{1,1}+x_{1,2} \\
	x_{2,1}+x_{2,2} \\
	x_{1,1} x_{2,1}+x_{1,1} x_{2,2}+x_{1,2} x_{2,2} \\
	x_{2,2}+x_{2,3} \\
	x_{2,1}+x_{2,2}+x_{2,3} \\
	x_{1,1} x_{2,1}+x_{1,1} x_{2,2}+x_{1,2} x_{2,2}+x_{1,1} x_{2,3}+x_{1,2} x_{2,3} \\
\end{array}
$$
and after the substitution the Newton polytope of the product of 
$$
\begin{array}{c}
	y_1+y_2 \\
	y_2+y_3 \\
	y_1 y_2+y_3 y_2+y_1 y_3 \\
	y_3+y_4 \\
	y_2+y_3+y_4 \\
	y_1 y_2+y_3 y_2+y_4 y_2+y_1 y_3+y_1 y_4,
\end{array}
$$
where the Newton polytope of the last line is the half-octahedron,
$$\text{Newt}(y_1 y_2+y_3 y_2+y_4 y_2+y_1 y_3+y_1 y_4) = \left\{x\in \Delta_{2,4}: x_1+x_2\ge 1\right\}.$$
As the Newton polytope in Equation \eqref{eq: standard rank 3 embedding associahedron} is a Minkowski sum of matroid polytopes it is a generalized permutohedron, but it is clearly not in the deformation cone of the associahedron as one of the facet inequalities has the form $x_1+x_3\ge 1$; on the other hand, the facet inequalities of the associahedron in the realization, as a generalized permutohedron, consist of only linear intervals of the form $\sum_{j=a}^bx_j\ge c$.

It is the presence of these positroid polytopes in the realization of the associahedron which could be interesting from a combinatorial point of view.  A similar story, involving positroid polytopes in hypersimplices $\Delta_{k,n}$, appears to hold for $k\ge 4$ as far as we have been able to check; now the implications remain to be determined, but at any rate, to the best of our knowledge, the structures are new (compare with for instance \cite{ceballos2015many}) -- and giving a combinatorial proof for our conjecture, that the $k=3$ Newton polytopes in Equation \eqref{eq: standard rank 3 embedding associahedron} are combinatorially isomorphic (particularly after the projection) to associahedra for all $n\ge 6$ seems like an interesting starting point for investigation.

\begin{figure}
	\centering
	\includegraphics[width=0.6\linewidth]{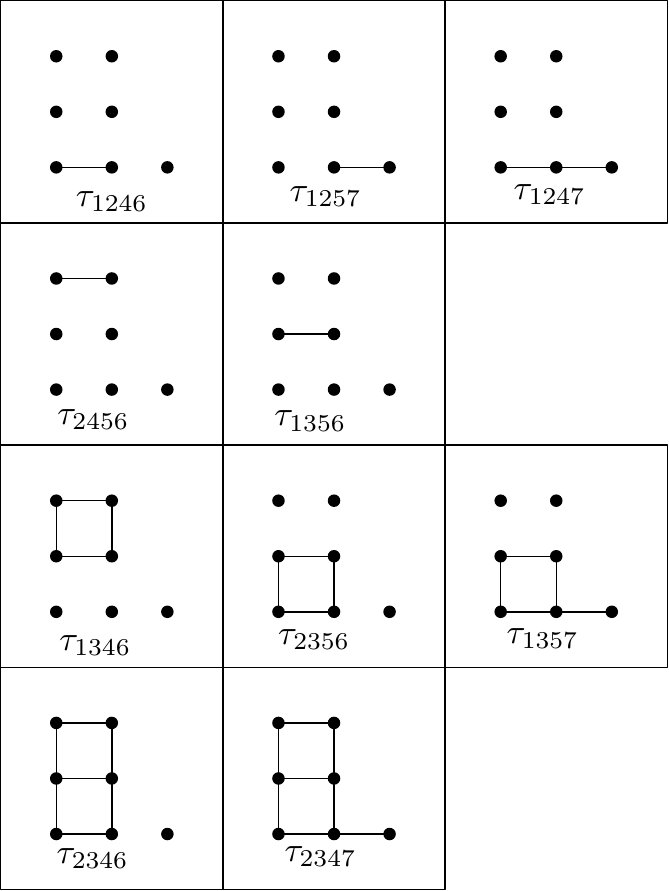}
	\caption{Newton polytope $\text{Assoc}^\ast(4,7)$.  This is combinatorially isomorphic to the $4$-dimensional associahedron.  Take all $\tau_J$'s such that their support lies within the ``hook''.  The hook as $k-2$ rows of length $2$ and one row of length $n-k$.  For details of the diagrammatic construction and the Definition of the polynomials $\tau_J$, see \cite{Early:2021tce}.}
	\label{fig:associahedron47}
\end{figure}

The $n\cdot C_{n-5}$ residues that we have considered are in bijection with certain faces of the Newton polytope 
$$\text{Newt}\left(\prod_{ijk}\Delta_{ijk}\right),$$
having chosen some positive parametrization.

One can apply the same methods used above to derive Equation \eqref{eq: standard rank 3 embedding associahedron} to any $k\ge 4$ positive parametrization and then extrapolate to get a whole hierarchy of novel realizations of the associahedron: for each $n$ there will be $n-2$ realizations of the $n-3$ dimensional associahedron, where the usual realization occurs when $k=2$.  So after applying the $k\ge 4$ analog of projection above, that is
$$x_{1,j} = y_j,\ x_{2,j} = y_{j+1},\ldots, x_{k-2,j} = y_{j+k-2}$$
for $j=1,2$, while for $j=1,\ldots, n-k$
$$x_{k-1,j} = y_{j+k-2},$$
then for a given $(k,n)$, the Minkowski summands will involve positroid polytopes of ranks $j=1,\ldots, k-1$.

Data is included as an ancillary file attached to the arXiv version of the paper.

Figure \ref{fig:associahedron47} contains the combinatorial data needed to reconstruct the $k=4$ realization $\text{Assoc}^\ast(4,7)$ of the 4-dimensional associahedron.  The star indicates that it is still conjectural that for general (k,n) the face lattice of $\text{Assoc}^\ast(k,n)$ is the same as the face lattice of the standard $n-3$-dimensional associahedron.  The explicit polynomials can be found (after running the Definitions component) by evaluating ``assocStar4n[7]'' in the attached notebook.  Also included are explicit sets of polynomials for $(k,n)$ in  
$$\{(3,n):n=5,\ldots, 10\},\ \{(4,n):n=6,\ldots, 10\},\ \{(5,n):n=7,\ldots, 10\}.$$

\section{Veronese Generalization of Biadjoint Amplitudes}\label{veronese}

Continuing with the analogy between $(k,n)$ SYM amplitude formulations and $(k,n)$ generalized biadjoint amplitudes, the next step is to find an analog of the Witten-RSV formulation \cite{Witten:2003nn,Roiban:2004yf} or in its more useful form using the Veronese embedding as presented in \cite{unification}. Here we only scratch the surface by proving a formula for $(3,6)$ but we suspect that general $(k,n)$ formulas should follow from the corresponding formulas in \cite{unification}. In fact this is part of the larger theme of Veronese hypersurface arrangements in the Grassmannian $G(3,n)$ introduced and studied in \cite{Nick-forthcoming}.  

Consider the following Veronese subvariety of $X(3,6)$,
\be\label{vero6} 
M := \left[ \begin{array}{cccccc}
	x_1^2 & x_2^2 & x_3^2 & x_4^2 &  x_5^2 & x_6^2  \\
	x_1 & x_2 & x_3 & x_4 & x_5 & x_6  \\
	1 & 1 & 1 & 1 &  1  &  1
\end{array}  \right] .
\ee 
This can now be brought to a canonical frame and then deformed so that it provides a redundant parameterization of $X(3,6)$, 
\be\label{veroDef} 
\left[ \begin{array}{cccccc}
	1 & 0 & 0 & 1+z &  \frac{(x_1-x_4)(x_3-x_5)}{(x_3-x_4)(x_1-x_5)} &\frac{(x_1-x_4)(x_3-x_6)}{(x_3-x_4)(x_1-x_6)}  \\
	0 & 1 & 0 & 1 & \frac{(x_2-x_4)(x_3-x_5)}{(x_3-x_4)(x_2-x_5)} &\frac{(x_2-x_4)(x_3-x_6)}{(x_3-x_4)(x_2-x_6)}  \\
	0 & 0 & 1 & 1 &  1  &  1
\end{array}  \right] .
\ee 
We choose $\{ x_4,x_5,x_6,z\}$ as variables and keep $\{x_1,x_2,x_3\}$ fixed. 

Now it is clear that the same procedure as in section 3 does not work since the function ${\cal S}^{(3)}_n$ does not develop any singularities as $z\to 0$. This is why one has to deform it as follows,
\be\label{sixV} 
{\cal S}^{(3):V}_6 := \sum_{a<b<c}\s_{abc} \log \Delta_{abc} + \v \log \left(\frac{\Delta_{123}\Delta_{345}\Delta_{561}\Delta_{246}}{\Delta_{234}\Delta_{456}\Delta_{612}\Delta_{135}} - 1\right).
\ee 
Note that the new term is invariant under the torus action and therefore $\v$ does not modify any of the conditions on $\s_{abc}$. We have added a superscript $V$ to the function in order to indicate that it has been modified.

Using \eqref{veroDef} it is easy to see that 
\be 
{\cal S}^{(3):V}_6 = \v \log z + \ldots 
\ee 
where the ellipses indicate terms that remain finite as $z\to 0$.

Following the same steps as in section 3 we find the analog to the equations \eqref{GxGz}
\be 
G_x=\left\{\frac{\partial {\cal S}^{(3):V}_6}{\partial x_i} =0,\quad i\in \{4,5,6\}\right\},\quad G_z=\left\{\frac{\partial {\cal S}^{(3):V}_6}{\partial z} =0 \right\}.
\ee
The equations in the set $G_x$ remain finite as $\v = 0$ and $z=0$. Moreover, they become the scattering equations of a $k=2$ function ${\cal S}^{(2)}_6$ which will be identified after the integrand is constructed. Letting $z=\v q$, the equation in $G_z$ is of the form
\be 
\frac{1}{q} + F(x) = 0
\ee 
and hence there is a single solution for each of the six solutions to $G_x$.

The Jacobian ${\rm det}'\Phi^{(3)}$ is easily computed, since the Hessian matrix is again block diagonal, to be  ${\rm det}'\Phi^{(3)} = ({\rm det}'\Phi^{(2)})/(\v q^2)$.

In order to get a nonzero answer for the amplitude the integrand must develop a singularity when $z=0$. It is easy to check that the Parke-Taylor function ${\rm PT}^{(3)}$ is not singular and therefore it has to be modified. We use the SYM construction as inspiration and define
\be\label{firstA} 
{\rm PT}^{(3):V}_6:= \frac{\Delta_{135}}{\Delta_{123}\Delta_{345}\Delta_{561}(\Delta_{234}\Delta_{456}\Delta_{612}\Delta_{135}-\Delta_{123}\Delta_{345}\Delta_{561}\Delta_{246})}.
\ee
This new function has a simple pole at $z=0$. 

An even more tantalizing way of writing \eqref{firstA} is the following
\be  
{\rm PT}^{(3):V}_6 :=\left( \frac{1}{\Delta_{123}\Delta_{234}\Delta_{345}\Delta_{456}\Delta_{561}\Delta_{612}}\right)\frac{1}{\left(1-\frac{\Delta_{123}\Delta_{345}\Delta_{561}\Delta_{246}}{\Delta_{234}\Delta_{456}\Delta_{612}\Delta_{135}}\right)}.
\ee
This is nothing by the un-deformed Parke-Taylor function times the new term added to ${\cal S}^{(3)}_n$.

We are now led to the following definition.
\begin{defn}\label{defn: vena}
	The generalized Veronese scalar amplitude is given by the following CEGM-like formula
	\be  
	m^{(3):V}_6 := \int \prod_{i=1}^2\prod_{a=4}^6 dy_{ia}\delta\left(\frac{\partial {\cal S}^{(3):V}}{\partial y_{ia}}\right) \left({\rm PT}^{(3):V}_6\right)^2 = \sum_{\rm solutions} \frac{1}{{\rm det}'\Phi^{(3)}} \left({\rm PT}^{(3):V}_6\right)^2.
	\ee 
	Here we have used the standard parameterization of $X(3,6)$ given in \eqref{cegm3para}.
\end{defn}

Computing the precise form of the jacobians to go from the $y_{ia}$ variables to $\{ x_4,x_5,x_6,z\}$ is a simple exercise and combining that with of ${\rm PT}^{(3):V}$ one finds that $m^{(3):V}_6$ has a simple pole at $\v =0$ and its residue agrees with the CHY formula for $m^{(2)}_6$, i.e.
\be 
{\rm Res}[m^{(3):V}_6]_{(\v =0)} = m^{(2)}_6. 
\ee 

The next step is to identify the kinematics invariants in $m^{(2)}_6$. This is easily done by setting $\v = 0$ and $z = 0$ in ${\cal S}^{(3):V}$. Using the parameterization \eqref{vero6} one finds that $\Delta_{abc} =\Delta_{ab}\Delta_{ac}\Delta_{bc}$. Therefore 
\be 
\left. {\cal S}^{(3):V}\right|_{\v=0,z=0} = \sum_{a<b} \left(\sum_c \s_{abc}\right)\log \Delta_{ab}. 
\ee 
Comparing with 
\be
{\cal S}^{(2)} = \sum_{a<b} s_{ab} \log \Delta_{ab},
\ee
one finds a very simply identification
\be 
s_{ab} := \left(\sum_c \s_{abc}\right) = \t_{[n]\setminus \{a,b\}}.
\ee 

Let us explain why this is a very natural formula. In \cite{Cachazo:2019ngv}, CEGM introduced a matrix analog to the helicity formalism for $k>2$ kinematic spaces. For $k=3$ one associates a rank one $3\times 3$ matrix $\mathbb{K}_a$ to each particle so that 
\be  
\s_{abc} = {\rm det}\left( \mathbb{K}_a+\mathbb{K}_b+\mathbb{K}_c\right)
\ee 
while $k=3$ momentum conservation becomes the condition
\be 
{\rm rank}\left( \sum_{a=1}^6 \mathbb{K}_a \right) = 1. 
\ee
Let us rewrite this as
\be 
\sum_{a=1}^6 \mathbb{K}_a = \Lambda \otimes \tilde\Lambda
\ee 
where $\Lambda, \tilde\Lambda \in \mathbb{C}^3$. Using the Cauchy-Binet identity one finds that
\be  
\t_{[n]\setminus \{a,b\}} = {\rm det} \left( \sum_{c\neq a,b} \mathbb{K}_c \right) = -{\rm det} \left( \mathbb{K}_a + \mathbb{K}_b -\Lambda \otimes \tilde\Lambda \right) .
\ee 
We leave the exploration of this natural embedding of ${\cal K}(2,n)$ into ${\cal K}(k,n)$ for future work. 

Before ending this section let us give one more example of how to translate a Grassmannian Veronese formula into the context at hand. Consider the formula given in eq. 3.20 of \cite{unification} for $k=3,n=7$, and identify it with a deformed Parke-Taylor function 
\be\label{transl}  
{\rm PT}^{(3):V}_7 :=\frac{\Delta_{135}\Delta_{612}\Delta_{136}\Delta_{235}}{\Delta_{671}\Delta_{123}\Delta_{345}}\times \frac{1}{\V_{123456}\V_{123567}}.
\ee 
Here the Veronese polynomials are defined so that
$$\V_{123456}:=
\Delta_{234}\Delta_{456}\Delta_{612}\Delta_{135}-\Delta_{123}\Delta_{345}\Delta_{561}\Delta_{246}.$$
Rewriting \eqref{transl} we find 
\be 
{\rm PT}^{(3):V}_7 = {\rm PT}^{(3)}\times \frac{1}{\left(1-\frac{\Delta_{123}\Delta_{345}\Delta_{561}\Delta_{246}}{\Delta_{234}\Delta_{456}\Delta_{612}\Delta_{135}}\right)\left(1-\frac{\Delta_{671}\Delta_{123}\Delta_{356}\Delta_{725}}{\Delta_{567}\Delta_{712}\Delta_{235}\Delta_{613}}\right)} .
\ee 
This form immediately implies that one should define the analog of \eqref{sixV} as 
\be  
{\cal S}^{(3):V}_7\!\!\! := \!\!\!\! \sum_{a<b<c}\!\!\s_{abc} \log \Delta_{abc} + \v_7 \log \left(\frac{\Delta_{123}\Delta_{345}\Delta_{561}\Delta_{246}}{\Delta_{234}\Delta_{456}\Delta_{612}\Delta_{135}} - 1\!\right)+\v_4 \log \left(\frac{\Delta_{671}\Delta_{123}\Delta_{356}\Delta_{725}}{\Delta_{567}\Delta_{712}\Delta_{235}\Delta_{613}}-1\!\right).
\ee 
Once again, one can prove that the generalized Veronese amplitude $m_7^{(3):V}$ has a two-dimensional residue of the form
\be 
{\rm Res}[m^{(3):V}_7]_{(\v_4,\v_7) =(0,0)} = m^{(2)}_7. 
\ee 
We leave it as an exercise to the reader to prove this by using the parameterization of $X(3,7)$ given by
\be\label{vero7} 
M := \left[ \begin{array}{ccccccc}
	x_1^2 & x_2^2 & x_3^2 & x_4^2 &  x_5^2 & x_6^2 & x_7^2 \\
	x_1 & x_2 & x_3 & x_4 & x_5 & x_6 & x_7 \\
	1 & 1 & 1 & 1+z_4 &  1  &  1 & 1+z_7
\end{array}  \right] .
\ee 
Note that we have introduced the coordinates $z_4$ and $z_7$ so that $\V_{123456}\sim z_4, \V_{123567}\sim z_7$. 

Let us end this section by pointing out that explicit forms of $m^{(3):V}_6$ or $m^{(3):V}_7$ in terms of kinematic invariants are not known and it would be interesting to explicitly compute them.

\section{Discussions}\label{sec: discussions}

In this paper, we have found a forest of $n\cdot C^{(2)}_{n-5}$ distinct identifications of the biadjoint scalar $m^{(2)}_n$ with multi-dimensional residues of $m^{(3)}_n$.  Each identification comes with its own parametrization of $X(3,n)$ which is optimally configured for the CEGM scattering equations formula for $m^{(3)}_n$.  After a further kinematic degeneration on three consecutive indices $i,i+1,i+2$, which we call a parallel hard limit, then the residues aggregate to the product $m^{(2)}_{n-3}\cdot m^{(2)}_n$.  We lay the groundwork for investigations for $k=4$ and beyond; an ancillary notebook to the arXiv version of the paper is intended to facilitate this.  It is curious that the residues of $m^{(3)}_n$ and $m^{(3,NC)}_n$ seemingly coincide, given the base $\eta_{3,n-1,n}$.  Do they coincide for other bases?  Does the coincidence of the two extend to $k\ge 4$?

Our scattering equations calculation of residues uses parametrizations which degenerate to (torus orbits of) positroid subvarieties of $G(3,n)$.  A very natural question is to perform the same investigation for more general positroid subvarieties of $G(k,n)$; which among these would govern the behavior of residues of $m^{(k)}_n$, and conversely, which residues are calculated in this way?

In a different line of thought, it is known that the Global Schwinger formula for $m^{(k)}_n$, introduced in \cite{Cachazo:2020wgu}, expresses it as an integral transform of the positive tropical Grassmannian ${\rm Trop}^+G(k,n)$; but the integrand can also be interpreted in terms of a polytope, a certain generalized associahedron and in this interpretation, information about residues is stored in its faces.  In this way, we are finding a physical amplitude as a face of a polytope. This is reminiscent, for instance, of the way amplitudes appear as facets of the cosmological polytope \cite{Arkani-Hamed:2017fdk}. In that context, correlation functions become amplitudes once a residue on the ``energy'' pole is computed.

Other future directions include:
\begin{itemize}
	\item Exploring similar residue constructions in the context of the stringy integrals introduced by Arkani-Hamed, He and Lam \cite{Arkani-Hamed:2019mrd}.
	\item Connecting the various kinematic limits encountered in this work, especially the parallel hard limit, to the theory of likelihood degenerations of \cite{Agostini:2021rze}
	\item Exploring the generalized Veronese scalar amplitudes by finding the structure of their poles and generalized Feynman diagrams. Some first steps in the study of the corresponding scattering equations have been taken in \cite{Nick-forthcoming}.
\end{itemize}

\section*{Acknowledgements}

The authors thank Dani Kaufman and Bruno Umbert for useful discussions.  The second author thanks Nima Arkani-Hamed, Johannes Henn and Bernd Sturmfels for encouragement and support.   This research was supported in part by a grant from the Gluskin Sheff/Onex Freeman Dyson Chair in Theoretical Physics and by Perimeter Institute.  Research at Perimeter Institute is supported in part by the Government of Canada through the Department of Innovation, Science and Economic Development Canada and by the Province of Ontario through the Ministry of Colleges and Universities.  This research received funding from the European Research Council (ERC) under the European Union’s Horizon 2020 research and innovation programme (grant agreement No 725110), Novel structures in scattering amplitudes.

\bibliographystyle{jhep}
\bibliography{references}

\end{document}